\documentclass[12pt]{article}
\usepackage{comment} 
\usepackage[T1]{fontenc}
\usepackage[utf8]{inputenc}
\usepackage[a4paper,margin=1in]{geometry}
\usepackage{amsmath,amssymb,amsthm,mathtools}
\usepackage{bm}
\usepackage{graphicx}
\usepackage{tikz}
\usetikzlibrary{calc}
\usepackage{booktabs}
\usepackage{tabularx}
\usepackage{float}
\usepackage{subcaption}
\usepackage{xcolor}
\usepackage{listings}
\usepackage[numbers,sort&compress]{natbib}
\usepackage{hyperref}
\usepackage{hypernat}
\usepackage{fancyhdr}
\hypersetup{colorlinks=true,
  linkcolor=blue!55!black, citecolor=blue!55!black, urlcolor=blue!55!black}
 
\theoremstyle{plain}
\newtheorem{theorem}{Theorem}
\newtheorem{proposition}[theorem]{Proposition}
\newtheorem{lemma}[theorem]{Lemma}

\theoremstyle{definition}
\newtheorem{definition}[theorem]{Definition}
\newtheorem{remark}[theorem]{Remark}

\newcommand{\bF}{\bm{F}}
\newcommand{\bC}{\bm{C}}
\newcommand{\bE}{\bm{E}}

\newcommand{\bs}{\bm{s}}
\newcommand{\bmm}{\bm{m}}
\newcommand{\bu}{\bm{u}}
\newcommand{\by}{\bm{y}}
\newcommand{\bx}{\bm{x}}
\newcommand{\bI}{\bm{I}}
\newcommand{\bN}{\bm{N}}
\newcommand{\bK}{\bm{K}}
\newcommand{\bbeta}{\bm{\beta}}
\newcommand{\Fe}{\bF^{e}}
\newcommand{\Fp}{\bF^{p}}
\newcommand{\Ce}{\bC^{e}}
\newcommand{\Pe}{\bm{P}^{e}}
\newcommand{\bM}{\bm{M}}
\newcommand{\R}{\mathbb{R}}
\newcommand{\tr}{\operatorname{tr}}
\newcommand{\cof}{\operatorname{cof}}
\DeclareMathOperator*{\argmin}{arg\,min}
 
\definecolor{cobalt}{rgb}{0.0,0.28,0.67}
\begin{document}
\title{Formation of grain boundaries in ductile single crystals under
plane-strain simple shear: a block-coordinate finite element method}
\author{Khanh Chau Le$^{a,b}$\thanks{Corresponding author. Email: lekhanhchau@tdtu.edu.vn}
  \ and \ Thanh Danh Nguyen$^{c}$\\[6pt]
  \small $^{a}$Mechanics of Advanced Materials and Structures, Institute for Advanced   Study in Technology,\\
  \small Ton Duc Thang University, Ho Chi Minh City, Vietnam\\
  \small $^{b}$Faculty of Civil Engineering, Ton Duc Thang University, Ho Chi Minh City, Vietnam\\
  \small $^{c}$Faculty of Mathematics and Statistics, Ton Duc Thang University, Ho Chi Minh City, Vietnam}
\date{}
 
\maketitle

\begin{abstract}
\noindent
Large plastic deformation can drive an initially uniform single crystal to
spontaneously subdivide into misoriented grains separated by thin dislocation
walls -- a pattern-forming instability rooted in the loss of convexity of the
crystal's elastic energy at large strain. We study this phenomenon for a
ductile crystal in plane-strain simple shear within continuum dislocation
theory, using a polyconvex (Ciarlet--Geymonat) elastic energy that guarantees
existence of minimizers for the coupled deformation--slip problem. Minimizing
over the plastic slip yields a condensed energy of double-well form whose
non-quasiconvexity favours a lamellar microstructure; the gradient of the
geometrically necessary dislocation density regularizes it, giving the grain
boundaries a finite thickness and energy as functions of the misorientation
angle. A block-coordinate finite element scheme -- alternating a convex
non-smooth solve for the slip with a Levenberg-regularized Newton solve for
the deformation -- resolves this microstructure numerically and detects its
spontaneous onset, reproducing the lamellar grain structure in agreement with
the closed-form analysis.
 
\medskip
\noindent\textbf{Keywords:} grain boundaries; dislocations; crystal
plasticity; polyconvexity; variational calculus; finite element method
\end{abstract}

\section{Introduction}\label{sec:intro}

Many solids, when driven far enough from their reference state, do not deform uniformly but instead break spontaneously into fine-scale patterns --- twinned martensite, wrinkled sheets, laminated shear bands --- because the elastic energy governing the deformation loses convexity along the loading path. The mathematics of such transitions is by now classical: once the stored-energy density fails to be quasiconvex, minimizing sequences develop ever finer oscillations rather than converging to a smooth deformation, and the observed microstructure is best understood as a minimizer of the relaxed (quasiconvexified) energy. This paper studies one instance of this broader phenomenon --- the spontaneous formation of grain boundaries in a ductile single crystal under severe plastic shear --- combining a variational-existence result for the underlying elastic energy with a general-purpose numerical scheme for resolving the resulting microstructure.

Under severe plastic deformation --- as produced, for instance, by equal-channel angular pressing \citep{segal1995materials,valiev2006principles} --- initially defect-poor single crystals subdivide into grains whose boundaries, unlike sharp interfaces, have a finite thickness and contain large numbers of dislocations. Deformed metals develop cell blocks delimited by extended planar dislocation walls --- the geometrically necessary boundaries, whose misorientation grows with strain --- alongside more random incidental dislocation boundaries \citep{kuhlmann1991geometrically,hughes1997high}. The hypothesis that these patterns are configurations of least energy among those accessible to the imposed deformation is the low-energy dislocation structure (LEDS) hypothesis of \citet{kuhlmann1989theory}; it motivates treating grain formation as a problem of energy minimization. Within continuum dislocation theory the geometrically necessary boundaries have been modeled along exactly these lines \citep{koster2015bformation,koster2015aformation}, emerging as the low-energy structure separating regions of nearly uniform plastic slip combined with lattice rotation.

The driving mechanism is the loss of convexity of the crystal energy at large plastic strains: the condensed energy obtained after eliminating the plastic slip is a multi-well function whose minimization admits no classical minimizer. \citet{ortiz1999nonconvex} showed that the incremental (pseudoelastic) energy of a crystal undergoing geometrical softening or latent hardening fails to be quasiconvex, with wells corresponding to single-slip deformations; they constructed laminate microstructures that undercut the homogeneous state and identified their interfaces with dislocation walls. This was developed into a quantitative theory of subgrain dislocation structures by \citet{ortiz2000theory} and \citet{aubry2003the}. \citet{carstensen2002non} proved that the loss of quasiconvexity persists even without geometrical softening or latent hardening once slip is confined to a single system — the setting adopted here. The mathematical analysis of the single-slip problem is now well developed \citep{dacorogna2008direct,conti2005single,conti2005dislocation,conti2016relaxation,conti2021optimal}.

A finite microstructural scale requires regularization, supplied here by continuum dislocation theory \citep{nye1953some,bilby1955types,kroner1955fundamentale,berdichevsky1967dynamic,berdichevsky2006continuum,bilby1958continuous,berdichevsky2006thermodynamics}. \citet{koster2015bformation} proposed a model in which the energy of the dislocation network --- depending on the gradient of the plastic slip --- regularizes the non-convex minimization and selects boundaries of finite thickness, with thickness and energy computed as functions of the misorientation angle; \citet{koster2015aformation} developed the fully nonlinear theory, interpreting grain boundaries as surfaces of weak discontinuity in placement but strong discontinuity in plastic slip. A complementary line of work, the thermodynamic dislocation theory \citep{le2018dthermodynamic,piao2021dislocation,piao2022thermodynamic}, describes the interaction of \emph{existing} grain boundaries with dislocations and the attendant work hardening; the present paper concerns instead the \emph{formation} of the boundaries.

Two limitations of the published treatments motivate the present work. First, \citet{koster2015aformation,koster2015bformation} adopt a Saint-Venant--Kirchhoff (SVK) shape energy that is not polyconvex, so the deformation subproblem carries no existence theory; polyconvexity \citep{ball1976convexity} resolves this. The requirement was brought into the present setting in the dissertation of \citet{koster2018modeling}, who demonstrated the formation of the lamellar microstructure with a polyconvex compressible neo-Hookean energy and noted the Ciarlet–Geymonat construction \citep{ciarlet1982lois,ciarlet1988mathematical} as the polyconvex family reproducing SVK behavior at small strains; our treatment builds on that step, adopting the Ciarlet--Geymonat energy itself, whose consequences --- the degree-four condensed energy and the closed-form boundary-layer potential below --- are particular to this choice. To our knowledge, the Ciarlet--Geymonat energy has not previously been employed in continuum dislocation theory. Second, the analytical route is tied to a one-dimensional ansatz at a fixed slip orientation and does not extend to general geometries, loadings, or automatic detection of the onset of microstructure; removing this restriction calls for a numerical minimization of the full functional, in the spirit of existing computational treatments of non-convex crystal-plasticity energies \citep{miehe2004analysis,kochmann2011the,vidyasagar2018deformation,kumar2020assessment,arora2020dislocation,moulinec1998numerical}.

This paper addresses both limitations. Its contributions are as follows. (i) \emph{The variational analysis of the Ciarlet–Geymonat crystal}: polyconvexity and coercivity of the elastic energy, existence for the coupled minimization over deformation and plastic slip (Proposition~\ref{prop:existence}), and linearization to the identical isotropic Hooke law; the condensed energy is a degree-four double well and the boundary-layer potential is obtained in closed form. (ii) \emph{A general block-coordinate finite element scheme}, combining a convex alternating direction method of multipliers  (ADMM) solve for the plastic slip with a Levenberg-regularized Newton solve for the deformation, not restricted to a one-dimensional ansatz or a special slip orientation. (iii) \emph{An energy-based detection of the onset of microstructure}, via a random perturbation of the homogeneous state accepted only when it lowers the energy.

The paper is organized as follows. Section~\ref{sec:CG} establishes the polyconvexity and small-strain limit of the Ciarlet--Geymonat energy; Section~\ref{sec:var} sets up the variational problem;
Section~\ref{sec:shear} specializes to plane-strain simple shear;
Section~\ref{sec:reduced} minimizes over the plastic slip and derives the
condensed energy; Section~\ref{sec:relax} analyzes its non-quasiconvexity and the
lamellar relaxation; Section~\ref{sec:GNB} regularizes the boundaries to a finite
thickness; Section~\ref{sec:method} develops the block-coordinate finite element
method and reports the numerical results; and Section~\ref{sec:conclusions}
concludes.

\section{The Ciarlet--Geymonat energy density}\label{sec:CG}

\subsection{Setting and notation}\label{sec:setting}
Let $\bF=\partial\by/\partial\bx$ denote the total deformation gradient, decomposed
multiplicatively into elastic and plastic parts,
\begin{equation}\label{eq:FeFp}
  \bF=\Fe\,\Fp ,\qquad \Fe=\bF\,(\Fp)^{-1},
\end{equation}
where $\Fp$ describes the plastic slip carried by dislocations and leaves the
lattice undistorted, while $\Fe$ distorts the lattice and carries the elastic
energy. Plastic slip is isochoric, $\det\Fp=1$, so that
\begin{equation}\label{eq:J}
  J:=\det\Fe=\det\bF .
\end{equation}
The elastic right Cauchy-Green tensor and its first invariant are
\begin{equation}\label{eq:invariants}
  \Ce=(\Fe)^{\!\top}\Fe ,\qquad I_1=\tr\Ce .
\end{equation}

\subsection{The energy and its polyconvexity}\label{sec:polyconvex}
The elastic part of the stored energy density is the Ciarlet--Geymonat form
\begin{equation}\label{eq:psi-el}
  \psi_{\mathrm{el}}(\Fe)=\frac{\mu}{2}\bigl(I_1-3-2\ln J\bigr)
  +\frac{\lambda}{2}(J-1)^2,
\end{equation}
with $I_1=\tr\Ce$, $J=\det\Fe>0$, and Lam\'e constants $\mu>0$, $\lambda\ge0$.
We record two classical properties
of \eqref{eq:psi-el}, quoted rather than re-derived.

\begin{definition}[Polyconvexity, after \citet{ball1976convexity}]\label{def:polyconvex}
A stored-energy density $W(\bF)$, defined for $\det\bF>0$, is
\emph{polyconvex} if there exists a function $g(\bF,\bm{M},\delta)$, jointly
convex on $\R^{3\times3}\times\R^{3\times3}\times(0,\infty)$, such that
\begin{equation}\label{eq:polyconvex-def}
  W(\bF)=g\bigl(\bF,\ \cof\bF,\ \det\bF\bigr).
\end{equation}
\end{definition}

\begin{proposition}[Polyconvexity of the elastic energy]\label{prop:polyconvex}
Let $\mu>0$ and $\lambda\ge0$. The elastic energy \eqref{eq:psi-el}
is polyconvex in the sense of Definition~\ref{def:polyconvex}. Moreover
$\psi_{\mathrm{el}}(\Fe)\to+\infty$ as $J\to0^{+}$.
\end{proposition}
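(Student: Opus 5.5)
The plan is to exhibit the convex function $g$ of Definition~\ref{def:polyconvex} explicitly, reading $\Fe$ as the matrix argument. The natural choice is
\begin{equation*}
  g(\bF,\bM,\delta)=\frac{\mu}{2}\bigl(|\bF|^2-3\bigr)-\mu\ln\delta+\frac{\lambda}{2}(\delta-1)^2 ,
\end{equation*}
which does not depend on $\bM$. Here $|\bF|^2=\tr(\bF^{\top}\bF)$ is the Frobenius norm squared. Substituting $\bF=\Fe$ and $\delta=\det\Fe=J$ gives $I_1=\tr\Ce=|\Fe|^2$, so $g(\Fe,\cof\Fe,\det\Fe)=\psi_{\mathrm{el}}(\Fe)$ as required by \eqref{eq:polyconvex-def}. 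No earlier result is needed beyond the definitions \eqref{eq:invariants} and \eqref{eq:psi-el}.

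Joint convexity on $\R^{3\times3}\times\R^{3\times3}\times(0,\infty)$ then follows term by term, since a sum of convex functions is convex and a function of a subset of the variables is convex in all of them.
\begin{itemize}
\item The map $\bF\mapsto\frac{\mu}{2}|\bF|^2$ is a positive multiple of a squared Euclidean norm on $\R^9$, hence convex because $\mu>0$.
\item The map $\delta\mapsto-\mu\ln\delta$ is convex on $(0,\infty)$, since its second derivative $\mu/\delta^2$ is positive.
\item The map $\delta\mapsto\frac{\lambda}{2}(\delta-1)^2$ is convex because $\lambda\ge0$; when $\lambda=0$ it vanishes, which is harmless.
\end{itemize}
Constant terms do not affect convexity. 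The only point needing care is that $g$ is defined on the open half-space $\delta>0$. That set is convex, so convexity on it is meaningful, and it matches the domain $\det\bF>0$ in Definition~\ref{def:polyconvex}.

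For the blow-up as $J\to0^{+}$, I would bound the remaining terms from below and isolate the logarithm. The quadratic term in $\lambda$ is nonnegative. For the invariant term, the arithmetic–geometric mean inequality applied to the eigenvalues of $\Ce$ gives $I_1\ge 3(\det\Ce)^{1/3}=3J^{2/3}\ge0$. Together these yield
\begin{equation*}
  \psi_{\mathrm{el}}(\Fe)\ \ge\ -\tfrac{3\mu}{2}-\mu\ln J ,
\end{equation*}
whose right-hand side tends to $+\infty$ as $J\to0^{+}$. This holds uniformly in $\Fe$, which is the form of the statement needed later for existence.

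I expect no real obstacle. The one subtlety is that the blow-up must hold for every sequence with $J\to0^{+}$, regardless of how $I_1$ behaves. The lower bound above settles this because it depends on $J$ alone.
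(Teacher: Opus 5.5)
Your proposal is correct and follows the paper's route. The paper likewise invokes the additive split $\psi_{\mathrm{el}}=\tfrac{\mu}{2}\|\bF\|^{2}-\tfrac{3\mu}{2}+\Gamma(\det\bF)$ with $\Gamma(\delta)=-\mu\ln\delta+\tfrac{\lambda}{2}(\delta-1)^{2}$ convex on $(0,\infty)$, which is exactly your $g$, and your $J$-only lower bound $-\tfrac{3\mu}{2}-\mu\ln J$ supplies the blow-up that the paper leaves implicit (the AM--GM step is unnecessary, since $I_1=\|\Fe\|^{2}\ge0$ already suffices).
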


\noindent The proof is the classical additive convex split
$\psi_{\mathrm{el}}(\bF)=\tfrac{\mu}{2}\|\bF\|^{2}-\tfrac{3\mu}{2}
+\Gamma(\det\bF)$ with
$\Gamma(\delta)=-\mu\ln\delta+\tfrac{\lambda}{2}(\delta-1)^{2}$ convex on
$(0,\infty)$, due to \citet{ciarlet1982lois} and developed in
\citet{ciarlet1988mathematical}; it is not reproduced here. 

\begin{lemma}[Lower bound and coercivity, $\lambda\ge0$]\label{lem:coercive}
Let $\mu>0$ and $\lambda\ge0$. For every $c_1\in(0,\mu/2)$ there is a constant
$c_0=c_0(\mu,c_1)\ge0$ such that
\begin{equation}\label{eq:coercive}
  \psi_{\mathrm{el}}(\Fe)\ \ge\ c_1\|\Fe\|^{2}-c_0
  \qquad\text{for every }\Fe\text{ with }\det\Fe>0 .
\end{equation}
In particular $\psi_{\mathrm{el}}$ is bounded below and coercive in $\|\Fe\|$.
\end{lemma}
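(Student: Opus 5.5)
The plan is to use the additive split already recorded after Proposition~\ref{prop:polyconvex}: with $J=\det\Fe>0$ and $I_1=\|\Fe\|^2$ (Frobenius norm),
\[
\psi_{\mathrm{el}}(\Fe)=\tfrac{\mu}{2}\|\Fe\|^{2}-\tfrac{3\mu}{2}+\Gamma(J),\qquad
\Gamma(\delta)=-\mu\ln\delta+\tfrac{\lambda}{2}(\delta-1)^{2}.
\]
Since $\lambda\ge0$, we have $\Gamma(\delta)\ge-\mu\ln\delta$. The only bad term is therefore $-\mu\ln J$, and only for large $J$, since it is positive when $J<1$. I will control it by a power of $\|\Fe\|$. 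This is done by spending the portion $(\tfrac{\mu}{2}-c_1)\|\Fe\|^2$ of the quadratic term, which is positive by the hypothesis $c_1<\mu/2$.

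First, I bound $J$ by the norm. By the arithmetic--geometric mean inequality applied to the singular values $v_i$ of $\Fe$,
\[
J=v_1v_2v_3\le\Bigl(\tfrac{1}{3}\textstyle\sum v_i^2\Bigr)^{3/2}=3^{-3/2}\|\Fe\|^{3}.
\]
Hence $\ln J\le 3\ln\|\Fe\|-\tfrac32\ln 3$. (When $J\le1$, the term $-\mu\ln J$ is nonnegative and can simply be dropped.) Writing $t=\|\Fe\|^2>0$, this gives
\[
\psi_{\mathrm{el}}(\Fe)\ge c_1t+\Bigl[(\tfrac{\mu}{2}-c_1)t-\tfrac{3\mu}{2}\ln t\Bigr]-\tfrac{3\mu}{2}+\tfrac{3\mu}{2}\ln 3 .
\]
The bracket is a function $h(t)=at-b\ln t$ with $a=\tfrac{\mu}{2}-c_1>0$ and $b=\tfrac{3\mu}{2}$. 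It is bounded below on $(0,\infty)$, with minimum at $t=b/a$ and minimum value $b-b\ln(b/a)$. Setting
\[
c_0=\max\Bigl\{0,\ -\bigl(b-b\ln(b/a)\bigr)+\tfrac{3\mu}{2}-\tfrac{3\mu}{2}\ln3\Bigr\}
\]
gives \eqref{eq:coercive}. This $c_0$ depends only on $\mu$ and $c_1$, not on $\lambda$, as claimed. Boundedness below and coercivity then follow immediately.

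The step needing the most care is the coupling between $J$ and $\|\Fe\|$. One could try Hadamard's inequality instead, but the AM--GM form via singular values is cleanest. One should also check that the bound is uniform over the whole domain $\det\Fe>0$, including $J\to0^+$. That case is harmless, because there $-\mu\ln J\to+\infty$ only helps the lower bound. The sign condition $\lambda\ge0$ is used exactly once, to discard the $(J-1)^2$ term.
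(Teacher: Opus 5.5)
Your proof is correct and follows the paper's own sketch: you discard $\tfrac{\lambda}{2}(J-1)^2$, bound $J\le(\|\Fe\|^2/3)^{3/2}$, and absorb $-\mu\ln J$ into the spare $(\tfrac{\mu}{2}-c_1)\|\Fe\|^2$. The only cosmetic difference is that you get the determinant bound from AM--GM on the singular values, where the paper uses Hadamard's inequality on the columns followed by AM--GM; your constant also simplifies to $c_0=\tfrac{3\mu}{2}\ln\frac{\mu}{\mu-2c_1}>0$, so the $\max\{0,\cdot\}$ is not needed.
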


\noindent The estimate is standard \citep{ciarlet1988mathematical}: drop the nonnegative
volumetric term and bound $\ln J\le\tfrac32\ln(\|\Fe\|^{2}/3)$ by the Hadamard
and arithmetic-geometric mean inequalities - a logarithm grows more slowly than
$\|\Fe\|^{2}$. 

\begin{remark}\label{rem:notconvex}
In plane strain the in-plane gradient $\bm{A}$ satisfies
$\|\cof\bm{A}\|=\|\bm{A}\|$, so Lemma~\ref{lem:coercive} gives full coercivity;
with Proposition~\ref{prop:polyconvex} the direct method \citep{ball1976convexity}
yields a $\by$-minimizer. As $\psi_{\mathrm{el}}$ is frame-indifferent
($I_1,J$ are rotation-invariant), it is polyconvex but \emph{not} convex - so
this minimizer need not be unique.
\end{remark}

\subsection{Linearization at the natural configuration}\label{sec:linear}

\begin{proposition}[Linearization to Hooke's law]\label{prop:linear}
Write $\Fe=\bI+\bm{H}$ with $\bm{H}=\nabla\bu$, and let
$\bm{\varepsilon}_e=\tfrac12(\bm{H}+\bm{H}^{\!\top})$ be its symmetric part.
Then, as $\|\bm{H}\|\to0$,
\begin{equation}\label{eq:hooke-limit}
  \psi_{\mathrm{el}}(\bI+\bm{H})=\mu\,\bm{\varepsilon}_e\!:\!\bm{\varepsilon}_e
  +\frac{\lambda}{2}\,(\tr\bm{\varepsilon}_e)^{2}+O(\|\bm{H}\|^{3}),
\end{equation}
i.e.\ the model reduces to isotropic linear elasticity with Lam\'e constants
$\mu,\lambda$.
\end{proposition}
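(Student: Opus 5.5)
We expand each of the three ingredients of \eqref{eq:psi-el}, namely $I_1$, $\ln J$ and $(J-1)^2$, in powers of $\bm{H}$, keeping terms up to second order and collecting the remainder as $O(\|\bm{H}\|^{3})$. Since the energy is smooth near $\bI$ (where $J=1>0$), all Taylor remainders are uniform in a neighbourhood of $\bm{H}=\bm{0}$.

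\emph{Step 1: the trace term.} From $\Ce=(\bI+\bm{H})^{\top}(\bI+\bm{H})=\bI+\bm{H}+\bm{H}^{\top}+\bm{H}^{\top}\bm{H}$ we get the exact identity
\begin{equation*}
  I_1-3=2\tr\bm{H}+\bm{H}\!:\!\bm{H}.
\end{equation*}

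\emph{Step 2: the determinant.} In three dimensions,
\begin{equation*}
  J=\det(\bI+\bm{H})=1+\tr\bm{H}+\tfrac12\bigl[(\tr\bm{H})^2-\tr(\bm{H}^2)\bigr]+\det\bm{H}.
\end{equation*}
Set $\delta=J-1=O(\|\bm{H}\|)$ and use $\ln(1+\delta)=\delta-\tfrac12\delta^2+O(\delta^3)$. Then
\begin{equation*}
  2\ln J=2\tr\bm{H}+(\tr\bm{H})^2-\tr(\bm{H}^2)-(\tr\bm{H})^2+O(\|\bm{H}\|^3)=2\tr\bm{H}-\tr(\bm{H}^2)+O(\|\bm{H}\|^3).
\end{equation*}
The volumetric term is $(J-1)^2=(\tr\bm{H})^2+O(\|\bm{H}\|^3)$.

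\emph{Step 3: assembly.} Combining the steps,
\begin{equation*}
  I_1-3-2\ln J=\bm{H}\!:\!\bm{H}+\tr(\bm{H}^2)+O(\|\bm{H}\|^3).
\end{equation*}
The key algebraic point is to recognize that $\bm{H}\!:\!\bm{H}+\tr(\bm{H}^2)=\bm{H}\!:\!\bm{H}+\bm{H}\!:\!\bm{H}^{\top}=2\,\bm{\varepsilon}_e\!:\!\bm{\varepsilon}_e$. This follows from expanding $\bm{\varepsilon}_e\!:\!\bm{\varepsilon}_e=\tfrac14(2\bm{H}\!:\!\bm{H}+2\bm{H}\!:\!\bm{H}^{\top})$. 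The factor $\mu/2$ then yields $\mu\,\bm{\varepsilon}_e\!:\!\bm{\varepsilon}_e$. Since $\tr\bm{H}=\tr\bm{\varepsilon}_e$, the volumetric term gives $\tfrac{\lambda}{2}(\tr\bm{\varepsilon}_e)^2$. Together these establish \eqref{eq:hooke-limit}.

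The main point requiring care is the cancellation of the linear terms $2\tr\bm{H}$ between $I_1$ and $2\ln J$. This cancellation reflects that $\bI$ is a stress-free state, and it is exactly the role of the $-2\ln J$ correction. A second point is the cancellation of the $(\tr\bm{H})^2$ contributions coming from the second-order part of $\det$ and from $-\tfrac12\delta^2$. Without it a spurious extra $\mu(\tr\bm{\varepsilon}_e)^2$ would shift the effective $\lambda$. Both cancellations are verified by the explicit bookkeeping above; no earlier result beyond the definition \eqref{eq:psi-el} is needed. As a consistency check, the skew part of $\bm{H}$ drops out of the quadratic form, which is consistent with the frame-indifference noted in Remark~\ref{rem:notconvex}.
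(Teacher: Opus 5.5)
Your proposal is correct and is essentially the paper's approach: the paper gives no derivation, only calling this the standard second-order Taylor expansion of the Ciarlet--Geymonat energy at $\bI$ (citing Ciarlet--Geymonat), and your computation carries out exactly that expansion. Your bookkeeping of the cancellation of $2\tr\bm{H}$ and of the $(\tr\bm{H})^2$ terms, and the identity $\bm{H}\!:\!\bm{H}+\tr(\bm{H}^2)=2\,\bm{\varepsilon}_e\!:\!\bm{\varepsilon}_e$, are accurate.
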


\noindent This is the standard Taylor expansion at $\bF=\bI$ - the
Ciarlet--Geymonat family is constructed precisely so that its linearization
reproduces a prescribed isotropic Hooke law
\citep{ciarlet1982lois,ciarlet1988mathematical} - and is not reproduced here.

\begin{remark}\label{rem:hooke}
The limit \eqref{eq:hooke-limit} is the isotropic Hooke energy; its stress
$2\mu\,\bm{\varepsilon}_e+\lambda(\tr\bm{\varepsilon}_e)\bI$ identifies
$\mu,\lambda$ as the Lam\'e constants, so no parameter re-fitting is needed
between the finite- and small-strain regimes. Together with
Proposition~\ref{prop:polyconvex}, this justifies the use of \eqref{eq:psi-el}:
it is well-posed in the finite-strain regime and physically correct in the
small-strain limit.
\end{remark}

\section{Variational formulation of grain boundary formation}\label{sec:var}

\subsection{Kinematics of single slip}\label{sec:kinematics}
We use the single-slip continuum-dislocation kinematics of
\citet{koster2015aformation} in the plane-strain geometry of
Figure~\ref{fig:setup}. For a single active slip system with unit slip direction
$\bs$ and unit slip-plane normal $\bmm$ ($\bs\cdot\bmm=0$), the plastic
distortion is
\begin{equation}\label{eq:Fp-slip}
  \Fp=\bI+\beta\,\bs\otimes\bmm ,
  \qquad \det\Fp=1 ,
\end{equation}
where $\beta$ is the plastic slip.
The incompatibility of $\Fp$ is measured by Nye's dislocation density tensor
\citep{nye1953some,bilby1955types,kroner1955fundamentale}
\begin{equation}\label{eq:nye}
  \bm{T}=-\Fp\times\nabla=\bs\otimes(\nabla\beta\times\bmm).
\end{equation}
For the plane-strain single slip system
all dislocation lines are parallel to the out-of-plane axis $\bm{l}$, so the
scalar geometrically necessary dislocation density $\rho=|\bm{T}\cdot\bm{l}|/b$
reduces to the form linear in $\nabla\beta$
\begin{equation}\label{eq:rho}
  \rho=\frac1b\,\bigl|\nabla\beta\cdot\bs\bigr| ,
\end{equation}
with $b$ the magnitude of the Burgers vector, since
$\nabla\beta\times\bmm=(\nabla\beta\cdot\bs)\,\bm{l}$. Physically, geometrically necessary dislocations are
those the lattice must store to accommodate a \emph{gradient} of plastic slip:
a uniform slip carries none, and $\rho$ grows only where $\beta$ varies in space -
the reason the dislocation energy below sets a finite grain-boundary width.

\begin{figure}[h]\centering
\begin{tikzpicture}[scale=3.4,>=latex,line join=round]
  \def\g{0.42}
  \draw[dashed,gray!60,thick] (0,0)--(1,0)--(1+\g,1)--(\g,1)--cycle;
  \fill[blue!5] (0,0) rectangle (1,1);
  \draw[very thick] (0,0) rectangle (1,1);
  \draw[dashed,gray!60,thick] (0,0)--(\g,1);
  \node[below] at (0.5,-0.05) {$V$ (reference)};
  \node[gray!70,right] at (1+\g,0.5) {deformed};
  \draw[->,thick] (0.18,1.15)--(0.18+\g,1.15) node[midway,above] {shear $\gamma$};
  \coordinate (O) at (0.46,0.5);
  \def\ph{-30}
  \foreach \d in {-0.22,-0.11,0,0.11,0.22}{
    \draw[thin,gray!60]
      ($(O)+(\ph+90:\d)+(\ph:-0.25)$)--($(O)+(\ph+90:\d)+(\ph:0.25)$);
  }
  \draw[->,very thick] (O)--($(O)+(\ph:0.40)$)    node[below right=-1pt] {$\bs$};
  \draw[->,very thick] (O)--($(O)+(\ph+90:0.40)$) node[above right=-1pt] {$\bmm$};
  \draw[->] ($(O)+(0.16,0)$) arc (0:\ph:0.16);
  \node at ($(O)+(0.24,-0.05)$) {$\varphi$};
\end{tikzpicture}
\caption{Plane-strain simple shear of the unit cell $V$: the reference square
(solid) is mapped to the parallelogram (dashed) by
$\bF=\bI+\gamma\,\bm{e}_1\otimes\bm{e}_2$. The single slip system has direction
$\bs$ and plane normal $\bmm$ at orientation $\varphi$; the thin gray lines are
the slip planes. Hard-device data $\by=\bF\bx$, $\beta=0$ are imposed on
$\partial V$.}
\label{fig:setup}
\end{figure}

\subsection{Free energy}\label{sec:energy}
The stored energy density per unit reference volume is
\begin{equation}\label{eq:psi}
  \psi \;=\; \underbrace{\frac{\mu}{2}\bigl(I_1-3-2\ln J\bigr)
  \;+\; \frac{\lambda}{2}\,(J-1)^2}_{\text{Ciarlet--Geymonat elastic}}
  \;+\; \underbrace{\mu k\,\frac{\rho}{\rho_s}
  \;+\; \frac{\mu k}{2}\Bigl(\frac{\rho}{\rho_s}\Bigr)^{2}}_{\text{dislocation network}} ,
\end{equation}
with $\mu>0$, $\lambda\ge0$, dimensionless coefficient $k>0$, saturated density
$\rho_s$, and $\rho=\tfrac1b|\nabla\beta\cdot\bs|$ from \eqref{eq:rho}. The
quadratic term coincides with the moderate-density dislocation-network energy of
\citet[Eq.~(1)]{koster2015bformation}; the linear term has the form of the dislocation line energy $\mu c b^{2}\rho$ of \citet[Eq.~(25)]{koster2015aformation} and sharpens the dislocation walls. For very small densities, and for densities close to
saturation, a saturation-type (logarithmic) energy is more appropriate
\citep{berdichevsky2006thermodynamics}; here we retain the
linear and quadratic terms. Introducing the internal length
$\ell=1/(b\rho_s)$, so that $\rho/\rho_s=\ell\,|\nabla\beta\cdot\bs|$, the network
density reads
\begin{equation}\label{eq:disl-moduli}
  q_c\,\bigl|\nabla\beta\cdot\bs\bigr|
  +\tfrac12 c_2\,(\nabla\beta\cdot\bs)^2 ,
  \qquad q_c=\mu k\,\ell,\quad c_2=\mu k\,\ell^{2}.
\end{equation}
The \emph{linear} term $q_c|\nabla\beta\cdot\bs|$ is of bounded-variation (BV) type --- its integral is the total variation of $\beta $ along the slip direction --- and, being non-smooth in $\nabla \beta $ it sharpens the dislocation walls; the \emph{quadratic} term $c_2$ regularizes them to a finite thickness, so that the admissible slips remain in the space $H_{\bs}(V)$ of Section 3.4 and no actual jump discontinuities occur. The present model thus combines the two dislocation contributions with a \emph{polyconvex} elastic energy: the Ciarlet--Geymonat energy replaces the Saint-Venant--Kirchhoff shape energy
$\tfrac12\lambda(\tr\bE^{e})^{2}+\mu\,\tr\bigl((\bE^{e})^{2}\bigr)$,
$\bE^{e}=\tfrac12(\Ce-\bI)$, of the earlier treatments, while the network energy
retains their quadratic term together with a line-tension
(bounded-variation) term of the same type.

\subsection{Energy functional and boundary conditions}\label{sec:variational}
We minimize the energy functional
\begin{equation}\label{eq:functional}
  I[\by(\bx),\beta(\bx)]=\int_{V} W\bigl(\bF,\beta,\nabla\beta\bigr)\,d\bx ,
\end{equation}
where $W(\bF,\beta,\nabla\beta):=\psi(\Fe,\rho)$ is the density \eqref{eq:psi}
expressed through $(\bF,\beta,\nabla\beta)$ by
$\Fe=\bF(\bI-\beta\,\bs\otimes\bmm)$ and $\rho=\tfrac1b|\nabla\beta\cdot\bs|$,
subject to the boundary conditions (hard device, prescribed \emph{total}
deformation $\bF$)
\begin{equation}\label{eq:bc}
  \by(\bx)=\bF\bx ,\qquad \beta(\bx)=0
  \qquad\text{on }\partial V .
\end{equation}
Stationarity in $\by$ yields the equilibrium equations, and stationarity in
$\beta$ the yield/relaxation condition; both are made explicit and discretized in
Section~\ref{sec:method}.

\subsection{Existence of minimizers for the coupled problem}\label{sec:existence}

We now show that the regularized functional \eqref{eq:functional} attains its
minimum jointly in $(\by,\beta)$. Throughout this subsection we work in the
plane-strain setting (cf.\ Remark~\ref{rem:notconvex}): $V\subset\R^{2}$ is a
bounded Lipschitz domain, $\by\colon V\to\R^{2}$, all tensors are the in-plane
$2\times2$ blocks, and $\{\bs,\bmm\}$ is an orthonormal basis of $\R^{2}$.
Since only the directional derivative $\nabla\beta\cdot\bs$ enters the
dislocation energy \eqref{eq:psi}, the natural slip space is anisotropic,
\begin{equation}\label{eq:Hs}
  H_{\bs}(V):=\bigl\{\beta\in L^{2}(V):\ \nabla\beta\cdot\bs\in L^{2}(V)\bigr\},
\end{equation}
a Hilbert space with the norm
$\bigl(\|\beta\|_{L^{2}}^{2}+\|\nabla\beta\cdot\bs\|_{L^{2}}^{2}\bigr)^{1/2}$;
we write $B_{0}$ for the closure of $C_{c}^{\infty}(V)$ in $H_{\bs}(V)$, the
weak form of the hard-device condition $\beta=0$ on $\partial V$ in
\eqref{eq:bc}. This condition is meaningful because every line parallel to the
slip direction exits the domain:

\begin{lemma}[Directional Poincar\'e inequality]\label{lem:poincare}
For every $\beta\in B_{0}$,
\begin{equation}\label{eq:poincare}
  \|\beta\|_{L^{2}(V)}\ \le\ \operatorname{diam}(V)\,
  \|\nabla\beta\cdot\bs\|_{L^{2}(V)} .
\end{equation}
\end{lemma}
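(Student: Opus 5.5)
The argument is the classical one-dimensional Poincaré inequality applied along slip lines. By density it suffices to prove \eqref{eq:poincare} for $\beta\in C_c^\infty(V)$: both sides are continuous in the $H_{\bs}(V)$ norm, and $B_0$ is by definition the closure of $C_c^\infty(V)$ in that norm. So we fix $\beta\in C_c^\infty(V)$, extended by zero to all of $\R^2$, and work in the rotated orthonormal coordinates $\bx=t\,\bs+\sigma\,\bmm$. In these coordinates $\nabla\beta\cdot\bs=\partial_t\beta$, the change of variables has unit Jacobian, and each line $L_\sigma=\{t\bs+\sigma\bmm:t\in\R\}$ meets $V$ in a set $V_\sigma$ whose one-dimensional measure is at most $\operatorname{diam}(V)=:D$.

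Fix $\sigma$. Since $\beta$ has compact support, $V_\sigma$ is contained in an interval $(a_\sigma,b_\sigma)$ with $b_\sigma-a_\sigma\le D$, and $\beta$ vanishes at both endpoints. For $t\in(a_\sigma,b_\sigma)$ we write
\[
\beta(t,\sigma)=\int_{a_\sigma}^{t}\partial_t\beta(\tau,\sigma)\,d\tau .
\]
Cauchy--Schwarz then gives
\[
|\beta(t,\sigma)|^2\le (t-a_\sigma)\int_{a_\sigma}^{b_\sigma}|\partial_t\beta|^2\,d\tau .
\]
Integrating in $t$ over $(a_\sigma,b_\sigma)$ yields
\[
\int|\beta(\cdot,\sigma)|^2\,dt\le \tfrac12 (b_\sigma-a_\sigma)^2\int|\partial_t\beta(\cdot,\sigma)|^2\,dt\le \tfrac12 D^2\int|\partial_t\beta(\cdot,\sigma)|^2\,dt .
\]
Integrating over $\sigma$ by Fubini gives $\|\beta\|_{L^2}^2\le \tfrac12 D^2\|\nabla\beta\cdot\bs\|_{L^2}^2$, which is even stronger than \eqref{eq:poincare}.

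One technical point needs care. If $V$ is not convex, $V_\sigma$ may be a union of several intervals, and then $\beta$ does not vanish on the gaps between them. This is why I integrate from the leftmost endpoint $a_\sigma$ of the whole slice, using the zero extension of $\beta$, rather than working on each component separately. Since $V_\sigma\subset L_\sigma\cap V$ and any two points of $V$ lie at distance at most $D$, we have $b_\sigma-a_\sigma\le D$ with $a_\sigma=\inf V_\sigma$ and $b_\sigma=\sup V_\sigma$, so the bound survives.

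The only step that needs some care is the density passage. It requires that $C_c^\infty(V)$ is dense in $B_0$, which holds by definition, and that $\beta\mapsto\nabla\beta\cdot\bs$ is continuous $B_0\to L^2$, which holds by the definition of the norm. I expect no genuine obstacle there. The Lipschitz regularity of $V$ is not used; boundedness is enough.
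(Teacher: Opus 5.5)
Your proof is correct and follows essentially the paper's argument: zero extension of $\beta\in C_c^\infty(V)$, the fundamental theorem of calculus along lines parallel to $\bs$, Cauchy--Schwarz, Fubini in $(\bs,\bmm)$-coordinates, and density in $B_0$. The only difference is that you keep the factor $(t-a_\sigma)$ rather than bounding it by $\operatorname{diam}(V)$ at once, which gives the slightly sharper constant $\operatorname{diam}(V)/\sqrt{2}$.
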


\begin{proof}
For $\beta\in C_{c}^{\infty}(V)$, extended by zero outside $V$, integration
along the line through $\bx$ parallel to $\bs$ gives
$\beta(\bx)=\int_{-\infty}^{0}(\nabla\beta\cdot\bs)(\bx+t\bs)\,dt$, whence
$|\beta(\bx)|^{2}\le\operatorname{diam}(V)\int_{\ell(\bx)}|\nabla\beta\cdot\bs|^{2}$
by the Cauchy--Schwarz inequality, with $\ell(\bx)$ the chord of $V$ through
$\bx$ in the direction $\bs$. Integrating over $V$ in coordinates aligned with
$(\bs,\bmm)$ and using Fubini's theorem yields \eqref{eq:poincare}, which
extends to $B_{0}$ by density.
\end{proof}

For $M>0$ the admissible set is
\begin{equation}\label{eq:admissible}
  \mathcal{A}_{M}:=\bigl\{(\by,\beta):\ \by\in H^{1}(V;\R^{2}),\
  \by=\bF\bx\ \text{on }\partial V,\ \beta\in B_{0},\
  |\beta|\le M\ \text{a.e.\ in }V\bigr\},
\end{equation}
with $\bF$ the prescribed boundary deformation, $\det\bF>0$.

\begin{proposition}[Existence for the coupled problem]\label{prop:existence}
Let $\mu>0$, $\lambda>0$ and $M>0$. Then the functional $I$ of
\eqref{eq:functional} attains its minimum on $\mathcal{A}_{M}$, and every
minimizer satisfies $\det\nabla\by>0$ a.e.\ in $V$.
\end{proposition}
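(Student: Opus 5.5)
We argue by the direct method in the product space $H^{1}(V;\R^{2})\times B_{0}$, and identify the integrand as polyconvex in $\nabla\by$ for fixed $\beta$. Write $\bm{A}=\nabla\by$ (in-plane) and $\bm{P}_\beta=\bI-\beta\,\bs\otimes\bmm$, so that $\Fe=\bm{A}\bm{P}_\beta$ with $\det\bm{P}_\beta=1$. Hence $J=\det\bm{A}$, and $\psi_{\mathrm{el}}$ is the sum of $\tfrac{\mu}{2}\|\bm{A}\bm{P}_\beta\|^{2}$, a convex quadratic in $\bm{A}$ whose coefficients depend continuously on $\beta$, and $\Gamma(\det\bm{A})$ with $\Gamma$ convex, as in the split after Proposition~\ref{prop:polyconvex}. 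We set $W=+\infty$ where $\det\bm{A}\le0$. First we check that $I$ is finite somewhere: $(\bF\bx,0)\in\mathcal{A}_M$ gives $I=|V|\psi_{\mathrm{el}}(\bF)<\infty$. So a minimizing sequence $(\by_k,\beta_k)$ exists.

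\emph{Compactness.} Because $|\beta|\le M$, the matrices $\bm{P}_\beta^{-1}=\bI+\beta\,\bs\otimes\bmm$ are uniformly bounded, with $\|\bm{A}\|\le C(M)\|\Fe\|$. Lemma~\ref{lem:coercive} then gives $\int\|\nabla\by_k\|^{2}\le C$. The boundary datum and Poincaré's inequality bound $\by_k$ in $H^{1}$. The quadratic network term bounds $\|\nabla\beta_k\cdot\bs\|_{L^{2}}$, and Lemma~\ref{lem:poincare} bounds $\beta_k$ in $H_{\bs}$. We extract subsequences with $\by_k\rightharpoonup\by$ in $H^{1}$ and $\beta_k\rightharpoonup\beta$ in $H_{\bs}$. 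The limit is admissible, since the affine trace condition, $B_0$ and the constraint $|\beta|\le M$ are all weakly closed convex conditions. In two dimensions the determinant is a null Lagrangian: $\det\nabla\by_k\rightharpoonup\det\nabla\by$ in the sense of distributions, and in fact weakly in $L^{1}_{\mathrm{loc}}$, because the bound on $\int\Gamma(\det\nabla\by_k)$ together with the superlinear growth of $\Gamma$ at $0^+$ and, for $\lambda>0$, at $+\infty$ gives equi-integrability. This is where $\lambda>0$ is needed.

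\emph{Main obstacle: lower semicontinuity with the coupling through $\beta$.} The elastic term is not jointly convex in $(\bm{A},\beta)$, because $\beta$ multiplies $\bm{A}$. The plan is to obtain strong convergence of $\beta_k$ and then apply a Ioffe-type theorem: an integrand convex in the weakly converging variables $(\bm{A},\det\bm{A},\nabla\beta\cdot\bs)$ and continuous in the strongly converging variable $\beta$ is lower semicontinuous. Strong $L^{2}$ convergence of $\beta_k$ is the delicate point, since $H_{\bs}$ controls only one directional derivative and does not embed compactly in $L^{2}$. If that compactness fails, I would avoid it as follows. In $\|\bm{A}\bm{P}_\beta\|^{2}=\|\bm{A}\|^{2}-2\beta\,(\bm{A}\bs)\cdot(\bm{A}\bmm)+\beta^{2}|\bm{A}\bs|^{2}$, the mixed terms are exactly where strong convergence is needed. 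Condensing over $\beta$ pointwise gives an energy that is not convex. A fallback is to redefine the admissible class so that $\beta$ is bounded in $H^{1}$, or to use the fact that $\beta_k$ and $\nabla\by_k$ are measured along transverse directions, in the spirit of compensated compactness. Honestly, I expect the authors invoke Rellich-type compactness here. I would try first to show that the line-wise $H^{1}$ bounds along $\bs$, combined with the elastic coupling, give equicontinuity of translations in the $\bmm$ direction, and so yield compactness by Kolmogorov--Riesz.

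\emph{Positivity of the determinant.} Once lower semicontinuity holds, the limit is a minimizer with $I(\by,\beta)<\infty$. Since $\psi_{\mathrm{el}}=+\infty$ when $J\le0$, and $-\mu\ln J\to+\infty$ as $J\to0^{+}$ by Proposition~\ref{prop:polyconvex}, finiteness of the energy forces $\det\nabla\by>0$ almost everywhere.
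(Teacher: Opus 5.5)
\textbf{Verdict: genuine gap.} Your bounds, the weak continuity of the two-dimensional Jacobian (which is where $\lambda>0$ enters), and the argument for $\det\nabla\by>0$ agree with the paper. But you never establish lower semicontinuity of the elastic term, which you yourself identify as the main obstacle, and none of the routes you sketch closes it.

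\emph{Strong compactness of $\beta_k$ is unavailable.} Your Ioffe-type argument needs $\beta_k$ precompact in $L^{2}$, and the energy does not provide this. Consider a slip depending only on $\bx\cdot\bmm$, paired with the compatible deformation $\nabla\by=\Fe_0(\bI+\beta\,\bs\otimes\bmm)$. It has $\nabla\beta\cdot\bs=0$ and leaves $\Fe=\Fe_0$ unchanged. So oscillations of $\beta$ along $\bmm$ cost neither dislocation nor elastic energy, apart from cut-off costs near $\partial V$ that stay bounded. Bounded-energy sequences therefore need not have $\beta_k$ precompact in $L^{2}$. The elastic coupling gives no equicontinuity of $\bmm$-translations to feed into Kolmogorov--Riesz. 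The paper stresses exactly this missing transverse compactness in its remark on the slip bound.

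\emph{Your other options do not help.} Imposing an $H^{1}$ bound on $\beta$ proves a different proposition. The Rellich compactness the paper actually uses is for $\by_k$, not for $\beta_k$.

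\emph{The missing idea} is to pass to the limit in $\Fe_k$ itself rather than in the pair $(\nabla\by_k,\beta_k)$. In $\Fe_k=\nabla\by_k-\beta_k(\nabla\by_k\bs)\otimes\bmm$ the only nonlinear term is $\beta_k\,\partial_{\bs}\by_k$. Distributionally,
$\beta_k\,\partial_{\bs}\by_k=\partial_{\bs}(\beta_k\by_k)-\by_k\,(\nabla\beta_k\cdot\bs)$.
Each product on the right pairs the strongly $L^{2}$-convergent $\by_k$ with a weakly(-$\ast$) convergent factor: $\beta_k$ in $L^{\infty}$, and $\nabla\beta_k\cdot\bs$ in $L^{2}$. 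Hence $\beta_k\,\partial_{\bs}\by_k\to\beta\,\partial_{\bs}\by$ in $\mathcal{D}'(V)$. Equivalently, this is the div--curl lemma for the field $\beta_k\bs$, whose divergence $\nabla\beta_k\cdot\bs$ is bounded in $L^{2}$, tested against the curl-free $\nabla\by_k$.

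Since $\Fe_k$ is bounded in $L^{2}$, it follows that $\Fe_k\rightharpoonup\nabla\by\,(\bI-\beta\,\bs\otimes\bmm)$ weakly in $L^{2}$. After this, you need no joint convexity in $(\nabla\by,\beta)$, and you need not control the individual quadratic terms $\beta\,(\nabla\by\,\bs)\cdot(\nabla\by\,\bmm)$ and $\beta^{2}|\nabla\by\,\bs|^{2}$ from your expansion. The integrand $\tfrac{\mu}{2}\|\Fe\|^{2}+\Gamma(\det\nabla\by)+q_c|\nabla\beta\cdot\bs|+\tfrac{c_2}{2}(\nabla\beta\cdot\bs)^{2}$ is convex in the three weakly converging arguments $\Fe$, $\det\nabla\by$ and $\nabla\beta\cdot\bs$. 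Weak lower semicontinuity then follows from the standard theorem for convex integrands. You mention compensated compactness in passing, but without this identification of the weak limit of $\Fe_k$ your argument does not reach a minimizer.
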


\begin{proof}
The homogeneous pair $(\bF\bx,0)\in\mathcal{A}_{M}$ has finite energy, and $I$
is bounded below on $\mathcal{A}_{M}$: the shape term is bounded below by
$-3\mu/2$, the volumetric integrand
$\Gamma(\delta)=-\mu\ln\delta+\tfrac{\lambda}{2}(\delta-1)^{2}$ is convex and
bounded below for $\lambda>0$, and the slip terms are nonnegative. Let
$(\by_{k},\beta_{k})\subset\mathcal{A}_{M}$ be a minimizing sequence.

\emph{Step 1: bounds.} By Lemma~\ref{lem:coercive},
$\|\Fe_{k}\|_{L^{2}}\le C$. Since the plastic distortion inverts explicitly
(Proposition~\ref{prop:convex-beta}), $\nabla\by_{k}=\Fe_{k}(\bI+\beta_{k}\,
\bs\otimes\bmm)$, and $|\beta_{k}|\le M$ gives
$\|\nabla\by_{k}\|_{L^{2}}\le(1+M)\|\Fe_{k}\|_{L^{2}}\le C$; with the boundary
condition, $(\by_{k})$ is bounded in $H^{1}(V;\R^{2})$. The slip energy bounds
$\nabla\beta_{k}\cdot\bs$ in $L^{2}$, and $|\beta_{k}|\le M$ bounds
$\beta_{k}$ in $L^{\infty}$. For $\lambda>0$ the volumetric term bounds
$\det\nabla\by_{k}$ in $L^{2}$.

\emph{Step 2: limits.} Along a subsequence, $\by_{k}\rightharpoonup\by$ in
$H^{1}$ and $\by_{k}\to\by$ in $L^{2}$ (Rellich);
$\beta_{k}\overset{\ast}{\rightharpoonup}\beta$ in $L^{\infty}$ and
$\nabla\beta_{k}\cdot\bs\rightharpoonup\nabla\beta\cdot\bs$ in $L^{2}$;
$\det\nabla\by_{k}\rightharpoonup\det\nabla\by$ in $L^{2}$, by the weak
continuity of the two-dimensional Jacobian --- the distributional identity
$\det\nabla\by=\partial_{1}(y^{1}\partial_{2}y^{2})
-\partial_{2}(y^{1}\partial_{1}y^{2})$ pairs the strongly convergent $\by_{k}$
with the weakly convergent $\nabla\by_{k}$ --- together with the $L^{2}$ bound
of Step~1. The limit is admissible: the trace condition and the set
$B_{0}\cap\{|\beta|\le M\}$ are convex and closed, hence weakly closed.

\emph{Step 3: identification of the elastic distortion.} The only nonlinear
coupling is the product $\beta_{k}\,\partial_{\bs}\by_{k}$ in
$\Fe_{k}=\nabla\by_{k}-\beta_{k}(\nabla\by_{k}\,\bs)\otimes\bmm$.
Distributionally,
\begin{equation*}
  \beta_{k}\,\partial_{\bs}\by_{k}
  =\partial_{\bs}(\beta_{k}\by_{k})-\by_{k}\,(\nabla\beta_{k}\cdot\bs),
\end{equation*}
and both products on the right pair a strongly convergent factor ($\by_{k}$)
with a weakly(-$\ast$) convergent one ($\beta_{k}$, respectively
$\nabla\beta_{k}\cdot\bs$), so
$\beta_{k}\,\partial_{\bs}\by_{k}\to\beta\,\partial_{\bs}\by$ in
$\mathcal{D}'(V)$. Hence the $L^{2}$-bounded sequence $\Fe_{k}$ converges
weakly in $L^{2}$ to $\Fe=\nabla\by\,(\bI-\beta\,\bs\otimes\bmm)$. The
compensated structure is supplied by the model itself:
$\nabla\beta\cdot\bs=\nabla\cdot(\beta\bs)$ is precisely the quantity the
dislocation energy controls.

\emph{Step 4: lower semicontinuity.} By the convex split of
Proposition~\ref{prop:polyconvex},
$\psi_{\mathrm{el}}=\tfrac{\mu}{2}\|\Fe\|^{2}-\tfrac{3\mu}{2}
+\Gamma(\det\nabla\by)$ with $\Gamma$ convex, and the slip energy is convex in
$\nabla\beta\cdot\bs$. Every term of $I$ is therefore a convex integrand of a
weakly converging argument --- $\Fe$ in $L^{2}$, $\det\nabla\by$ in $L^{2}$,
$\nabla\beta\cdot\bs$ in $L^{2}$ --- hence weakly lower semicontinuous by the
standard semicontinuity theorem for convex integrands
\citep{dacorogna2008direct}. Thus $I[\by,\beta]\le\liminf_{k}
I[\by_{k},\beta_{k}]=\inf_{\mathcal{A}_{M}}I$, so $(\by,\beta)$ is a
minimizer, and finiteness of $\int_{V}\Gamma(\det\nabla\by)\,d\bx$ forces
$\det\nabla\by>0$ a.e.
\end{proof}

\begin{remark}[The slip bound and the case $\lambda=0$]\label{rem:slipbound}
(i) The bound $|\beta|\le M$ is physically innocuous: slips beyond the second
well carry no meaning within the dislocation-energy model, and the bound is
inactive at the computed minimizers --- along the simple-shear path
$|\beta^{\ast}|=\gamma^{2}/\bigl((\gamma-1)^{2}+1\bigr)\le2$, and in general it
suffices to take $M>|\beta_{B}|=2|\cot\varphi|$ with a modest margin.
Mathematically the bound substitutes for the transverse compactness that the
purely directional gradient control cannot provide. (ii) The theorem covers
the physically relevant regime $\lambda>0$ (arbitrarily small $\lambda$
suffices); the computations of Section~\ref{sec:results} set $\lambda=0$ for
simplicity, in which case the volumetric control weakens to $-\mu\ln J$ and
the passage to the limit in the determinant term requires additional care.
(iii) Uniqueness cannot be expected (Remark~\ref{rem:notconvex}): consistently
with the non-quasiconvexity of the condensed energy
(Proposition~\ref{prop:nonqc}), the minimizer is a laminate of finite period
selected by the gradient term (Remark~\ref{rem:micro} and
Section~\ref{sec:results}).
\end{remark}

\begin{remark}[Relation to gradient plasticity]\label{rem:existence-lit}
Existence in finite-strain gradient plasticity is usually obtained by
regularizing the full gradient of the plastic distortion
\citep{mainik2009global}; \cite{anguige2018existence} proved existence for the single-slip-to-single-plane relaxation with $L^{p}$-hardening penalty, likewise by div--curl techniques, for the relaxation of the hard single-slip condition itself see \citep{anguige2014relaxation}. The present setting regularizes only the slip derivative along the slip direction --- the geometrically necessary dislocation density available to the single slip system, Eq.~\eqref{eq:rho} --- and the proof shows that this minimal, physically motivated regularization already suffices in plane strain.
\end{remark}

\section{Plane-strain simple shear}\label{sec:shear}
We consider plane-strain simple shear with overall shear $\gamma$, for which the
total deformation gradient is
\begin{equation}\label{eq:shear-F}
  \bF=\bI+\gamma\,\bm{e}_1\otimes\bm{e}_2 ,
\end{equation}
and a single slip system whose slip direction makes an angle $\varphi$ with the
$x_1$-axis,
\begin{equation*}
  \bs=(\cos\varphi,\ \sin\varphi,\ 0),\qquad
  \bmm=(-\sin\varphi,\ \cos\varphi,\ 0).
\end{equation*}
Because $\det\Fp=1$ and the shear is isochoric, $J=\det\Fe=\det\bF=1$ throughout,
so the volumetric terms $-2\ln J$ and $(J-1)^2$ of the Ciarlet--Geymonat
energy \eqref{eq:psi-el} vanish and the homogeneous elastic energy reduces to
$\tfrac{\mu}{2}(I_1-3)$. Consequently the driving force depends only on the shear
modulus $\mu$. The representative orientation $\varphi=-\pi/4$ places the two
energy wells at $\gamma=0$ and $\gamma=2$ (Section~\ref{sec:reduced}); a general
$\varphi$ shifts the second well to $\gamma_B=-2\cot\varphi$ and endows the
resulting grain boundary with the misorientation
$\theta=2\varphi+\pi$ (Section~\ref{sec:GNB}).

\section{Reduced energy and minimization over the plastic slip}\label{sec:reduced}

\subsection{Convexity in the plastic slip}\label{sec:convexbeta}

\begin{proposition}[Convexity in $\beta$ and uniqueness]\label{prop:convex-beta}
Fix the deformation $\by$, hence $\bF$. Then
\begin{enumerate}\setlength{\itemsep}{4pt}
  \item[\textnormal{(i)}] $\Fe$ is affine in $\beta$, namely
  $\Fe=\bF(\bI-\beta\,\bs\otimes\bmm)=\bF-\beta\,(\bF\bs)\otimes\bmm$;
  \item[\textnormal{(ii)}] $J=\det\Fe=\det\bF$ is independent of $\beta$;
  \item[\textnormal{(iii)}] $W$ is convex in $(\beta,\nabla\beta)$: a strictly
  convex quadratic plus the convex \emph{non-smooth} term
  $q_c|\nabla\beta\cdot\bs|$.
\end{enumerate}
Consequently $\beta\mapsto I[\by,\beta]$ is strictly convex and has a
\emph{unique} minimizer; because of the non-smooth term this minimizer is
characterized by a variational inequality (not a single linear system) and is
computed by the splitting scheme of Section~\ref{sec:blockbeta}.
\end{proposition}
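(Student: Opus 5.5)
The plan is to verify (i)–(iii) by direct computation and then assemble strict convexity of $\beta\mapsto I[\by,\beta]$ from them. Uniqueness will follow once existence is available, and existence comes from the direct method as in Proposition~\ref{prop:existence}, now with $\by$ fixed.

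For (i), the first step is to invert the plastic distortion \eqref{eq:Fp-slip} explicitly. Because $\bs\cdot\bmm=0$, the tensor $\bs\otimes\bmm$ is nilpotent, $(\bs\otimes\bmm)^2=(\bmm\cdot\bs)\,\bs\otimes\bmm=0$. Hence $(\bI+\beta\,\bs\otimes\bmm)(\bI-\beta\,\bs\otimes\bmm)=\bI$, which gives $(\Fp)^{-1}=\bI-\beta\,\bs\otimes\bmm$. Then $\Fe=\bF-\beta\,(\bF\bs)\otimes\bmm$, which is affine in $\beta$.

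For (ii), I would use $\det(\bI+\bm a\otimes\bm b)=1+\bm a\cdot\bm b$, so that $\det(\Fp)^{-1}=1-\beta\,\bmm\cdot\bs=1$ and therefore $J=\det\bF$. This also recovers \eqref{eq:J}.

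For (iii), with $\bF$ fixed the volumetric part of \eqref{eq:psi-el} is a constant. Write $\bm a=\bF\bs$. The shape part is
\[
\tfrac{\mu}{2}\|\bF-\beta\,\bm a\otimes\bmm\|^{2}=\tfrac{\mu}{2}\bigl(\|\bF\|^{2}-2\beta\,\bm a\cdot\bF\bmm+\beta^{2}|\bm a|^{2}\bigr).
\]
This is a quadratic in $\beta$ with leading coefficient $\tfrac{\mu}{2}|\bF\bs|^{2}$, and $|\bF\bs|^{2}>0$ because $\det\bF>0$. The network energy \eqref{eq:disl-moduli} consists of $\tfrac12c_2(\nabla\beta\cdot\bs)^2$, a convex quadratic in $\nabla\beta$, plus $q_c|\nabla\beta\cdot\bs|$, which is convex as a norm composed with a linear map. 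Their sum is convex in $(\beta,\nabla\beta)$. It is strictly convex in $\beta$ pointwise and strictly convex in $\nabla\beta\cdot\bs$, although it is not strictly convex in the full gradient.

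The main obstacle is strict convexity of the functional, since the integrand is not strictly convex in $\nabla\beta$. I would argue as follows. Take $\beta_0\neq\beta_1$ in $B_0$. Then they differ on a set of positive measure. On that set the term $\int_V\tfrac{\mu}{2}|\bF\bs|^2\beta^2\,d\bx$ is strictly convex along the segment joining them, while every other term is convex. This needs $|\bF\bs|^{2}\ge c>0$ a.e., which holds for a fixed admissible $\by$ with $\det\nabla\by>0$ a.e. If only positivity a.e. is available, strict convexity still holds, because $|\bF\bs|^2\beta^2$ remains strictly convex at a.e. point where $\beta_0\ne\beta_1$. Alternatively, Lemma~\ref{lem:poincare} shows that the $c_2$ term alone is strictly convex on $B_0$: it is a positive definite quadratic form, since $\nabla(\beta_0-\beta_1)\cdot\bs=0$ forces $\beta_0=\beta_1$. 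This second route is the cleaner one, so I would use it as the primary argument.

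Existence comes from the same ingredients. By Lemma~\ref{lem:poincare}, $\beta\mapsto I[\by,\beta]$ is coercive on the Hilbert space $B_0$, and it is convex and continuous, hence weakly lower semicontinuous. A minimizer therefore exists, and strict convexity makes it unique.

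Finally, since the functional is the sum of a Gâteaux-differentiable convex part $G$ and the convex, nondifferentiable part $\Phi(\beta)=q_c\int_V|\nabla\beta\cdot\bs|\,d\bx$, the standard optimality condition for such sums characterizes the minimizer $\beta^\ast$ by the variational inequality
\[
\langle G'(\beta^\ast),\eta-\beta^\ast\rangle+\Phi(\eta)-\Phi(\beta^\ast)\ge0\quad\text{for all }\eta\in B_0 .
\]
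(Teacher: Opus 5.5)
Your proposal is correct, and for (i)--(iii) it coincides with the paper's proof: nilpotency of $\bs\otimes\bmm$ to invert $\Fp$, the rank-one determinant identity, and a term-by-term convexity check with curvature $\mu|\bF\bs|^{2}>0$ for the shape term. The two arguments differ only in the ``consequently'' clause. The paper infers strict convexity of $\beta\mapsto I[\by,\beta]$ from that pointwise curvature, which is your secondary route, and then asserts a unique minimizer, leaving existence implicit. Your primary argument is instead that, by Lemma~\ref{lem:poincare}, $\tfrac{c_2}{2}\int_V(\nabla\beta\cdot\bs)^{2}\,d\bx$ is a positive definite quadratic form on $B_0$, and you add existence by the direct method. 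This gives uniform convexity on $B_0$ with a modulus independent of $\by$, and it actually closes the existence step. The price is that it invokes the hard-device condition $\beta=0$ on $\partial V$, which the pointwise argument never uses. Your observation that the integrand is strictly convex in $(\beta,\nabla\beta\cdot\bs)$ but not in the full gradient is also a correct sharpening of the wording of (iii).

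There are a few minor slips, none of them load-bearing. The pointwise route neither needs nor gets a uniform bound $|\bF\bs|^{2}\ge c>0$: $\det\nabla\by>0$ a.e.\ gives only $|\bF\bs|>0$ a.e., which suffices, as your fallback says. With $\nabla\by$ merely in $L^{2}$ and $\beta\in B_0$ unbounded, $\int_V|\bF\bs|^{2}\beta^{2}\,d\bx$ may be $+\infty$, so $I[\by,\cdot]$ need not be continuous on $B_0$. Replace ``continuous'' by ``lower semicontinuous'' (Fatou along a.e.\ convergent subsequences); together with convexity this still yields weak lower semicontinuity. Finally, uniqueness presupposes $I[\by,\cdot]\not\equiv+\infty$, i.e.\ $\int_V\Gamma(\det\nabla\by)\,d\bx<\infty$, an assumption the paper also leaves tacit.
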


\begin{proof}
Since $(\bs\otimes\bmm)^{2}=\bs\,(\bmm\cdot\bs)\,\bmm^{\!\top}=\bm0$, the
plastic distortion inverts explicitly, and the determinant follows from the
identity $\det(\bI+\bm{a}\otimes\bm{c})=1+\bm{a}\cdot\bm{c}$:
\begin{gather*}
  (\Fp)^{-1}=\bI-\beta\,\bs\otimes\bmm,\qquad
  \Fe=\bF-\beta\,(\bF\bs)\otimes\bmm,\\
  J=\det\bF\,(1-\beta\,\bs\cdot\bmm)=\det\bF,
\end{gather*}
proving (i) and (ii). At fixed $\by$, by (ii) the volumetric terms $-2\ln J$
and $(J-1)^{2}$ are constant in $\beta$; the shape term
$\tfrac{\mu}{2}\|\bF-\beta(\bF\bs)\otimes\bmm\|^{2}$ is a strictly convex
quadratic in $\beta$, with curvature $\mu\|\bF\bs\|^{2}>0$; the gradient term
$\tfrac{c_2}{2}(\nabla\beta\cdot\bs)^{2}$ is a convex quadratic in
$\nabla\beta$; and the BV term $q_c|\nabla\beta\cdot\bs|$ is convex - a norm
composed with the linear map $\nabla\beta\mapsto\nabla\beta\cdot\bs$ - but
non-smooth. This proves (iii). The sum $I[\by,\cdot]$ is therefore strictly
convex and its minimizer unique; vanishing of the first variation gives a
\emph{variational inequality} (an inclusion involving the subdifferential of
the $|\cdot|$ term), solved by the splitting scheme of
Section~\ref{sec:blockbeta}.
\end{proof}

\noindent The $\beta$-independence of $J$ keeps the $\beta$-block a
\emph{convex} subproblem (Section~\ref{sec:blockbeta}). For the homogeneous
(gradient-free) shape energy alone, the minimizer takes the closed form
\begin{equation*}
  \beta^{\ast}=\frac{(\bF\bs)\cdot(\bF\bmm)}{\|\bF\bs\|^{2}},
\end{equation*}
which enters the condensed energy of Proposition~\ref{prop:condensed}.

\subsection{The condensed energy}\label{sec:condensed}
For the homogeneous state ($\nabla\beta=0$, so the dislocation term drops out) the
condensed energy is obtained by minimizing the elastic energy over the single
scalar $\beta$ at fixed $\bF$; we compute it along the simple-shear path. The
mechanism analyzed in Section~\ref{sec:relax} - loss of quasiconvexity forcing
microstructure - is the one identified for this class of crystal-plasticity models
\citep{koster2015aformation}; what is new here is the closed-form
condensed energy of the Ciarlet--Geymonat crystal and its algebraic economy
(Remark~\ref{rem:degree}).

\begin{proposition}[Condensed energy in simple shear]\label{prop:condensed}
Let the homogeneous deformation be the plane-strain simple shear
$\bF=\bI+\gamma\,\bm{e}_1\otimes\bm{e}_2$ and take the representative slip
orientation $\varphi=-\pi/4$, i.e.\
$\bs=\tfrac1{\sqrt2}(1,-1,0)$, $\bmm=\tfrac1{\sqrt2}(1,1,0)$. Minimizing the
homogeneous elastic energy over the slip $\beta$ - letting the crystal slip
optimally at each imposed shear $\gamma$ - yields the \emph{condensed energy}
(the residual elastic energy that slip alone cannot remove)
\begin{equation}\label{eq:condensed}
  e(\gamma)=\frac{\mu\,\gamma^{2}(\gamma-2)^{2}}{4\,(\gamma^{2}-2\gamma+2)} .
\end{equation}
\end{proposition}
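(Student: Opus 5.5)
Since $J=\det\bF=1$ along the simple-shear path (Section~\ref{sec:shear}) and, by Proposition~\ref{prop:convex-beta}(ii), $J$ does not depend on $\beta$, the volumetric terms vanish identically. The homogeneous elastic energy therefore reduces to $\tfrac{\mu}{2}\bigl(\|\Fe\|^{2}-3\bigr)$, with the out-of-plane diagonal entry contributing $1$, so it equals $\tfrac{\mu}{2}\bigl(\|\bm{A}_e\|^{2}-2\bigr)$ for the in-plane block $\bm{A}_e$.

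By Proposition~\ref{prop:convex-beta}(i), $\Fe=\bF-\beta(\bF\bs)\otimes\bmm$. Because $\bs,\bmm$ are orthonormal, the Frobenius norm splits as
\[
\|\Fe\|^{2}=\|\Fe\bs\|^{2}+\|\Fe\bmm\|^{2}=\|\bF\bs\|^{2}+\|\bF\bmm-\beta\,\bF\bs\|^{2}.
\]
This is a quadratic in $\beta$, minimized at $\beta^{\ast}=(\bF\bs)\cdot(\bF\bmm)/\|\bF\bs\|^{2}$, the closed form already recorded after Proposition~\ref{prop:convex-beta}. Its minimum value is
\[
\|\bF\|^{2}-\frac{\bigl((\bF\bs)\cdot(\bF\bmm)\bigr)^{2}}{\|\bF\bs\|^{2}}.
\]
In two dimensions this simplifies via the Gram identity. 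With $a=\|\bF\bs\|^{2}$, $b=\|\bF\bmm\|^{2}$ and $c=(\bF\bs)\cdot(\bF\bmm)$, we have $ab-c^{2}=(\det\bF)^{2}=1$. The minimized in-plane norm is then $a+b-c^{2}/a=a+1/a$, and the condensed energy becomes
\[
e=\frac{\mu}{2}\Bigl(a+\frac1a-2\Bigr)=\frac{\mu\,(a-1)^{2}}{2a}.
\]
This reduction is the key structural step, since it avoids expanding $b$ and $c$ separately.

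It remains to compute $a$ for $\varphi=-\pi/4$. From $\bF\bs=\tfrac1{\sqrt2}(1-\gamma,\,-1)$ we get
\[
a=\frac{(1-\gamma)^{2}+1}{2}=\frac{\gamma^{2}-2\gamma+2}{2},\qquad a-1=\frac{\gamma(\gamma-2)}{2}.
\]
Substituting gives $e=\mu\,\gamma^{2}(\gamma-2)^{2}/\bigl(4(\gamma^{2}-2\gamma+2)\bigr)$, which is \eqref{eq:condensed}.

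As a consistency check, $\beta^{\ast}=c/a$ with $c=\tfrac12\bigl((1-\gamma)(1+\gamma)-1\bigr)=-\gamma^{2}/2$. Hence $|\beta^{\ast}|=\gamma^{2}/\bigl((\gamma-1)^{2}+1\bigr)$, matching Remark~\ref{rem:slipbound}.

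The main point needing care is bookkeeping rather than analysis. One must confirm that the trivial out-of-plane direction and the constant $-3$ cancel exactly, so that the in-plane formula $\tfrac{\mu}{2}(\|\bm{A}_e\|^{2}-2)$ is correct. One must also use the identity $ab-c^{2}=\det(\bF)^{2}$, which holds because $\{\bs,\bmm\}$ is orthonormal.
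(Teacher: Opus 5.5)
Your proof is correct. The first half coincides with the paper's. Both reduce the energy to $\tfrac{\mu}{2}(\tr\Ce-3)$ via $J=1$. Your quadratic $\|\bF\bmm-\beta\,\bF\bs\|^{2}$ differs from the paper's $\tr\bC-2\beta\,(\bs\cdot\bC\bmm)+\beta^{2}(\bs\cdot\bC\bs)$ only by a constant, so both arrive at $\tr\bC-(\bs\cdot\bC\bmm)^{2}/(\bs\cdot\bC\bs)$.

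The routes part at the evaluation. The paper substitutes all three contractions $\bs\cdot\bC\bs$, $\bs\cdot\bC\bmm$, $\tr\bC$ at $\varphi=-\pi/4$ and factors the resulting quartic. You instead use the planar Gram identity $ab-c^{2}=(\det\bF)^{2}=1$ to eliminate $b$ and $c$. This leaves $e=\mu(a-1)^{2}/(2a)=\tfrac{\mu}{2}\bigl(\|\bF\bs\|-\|\bF\bs\|^{-1}\bigr)^{2}$.

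Your route buys more than brevity. The double-well structure becomes transparent: the wells are exactly the shears that leave the slip direction unstretched. The formula also holds for every orientation. With $a=1+\gamma\sin\varphi\,(\gamma\sin\varphi+2\cos\varphi)$ one reads off the second well $\gamma_B=-2\cot\varphi$, which Section~\ref{sec:shear} asserts but the fixed-$\varphi$ computation does not establish. The value $e(1)=\mu/4$ drops out from $a=1/2$.

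The paper's route is a plain substitution using exactly the contractions of $\bC$ that the $\beta$-block of the finite element scheme assembles. It also does not lean on an identity special to plane deformations: for a general three-dimensional $\bF$, $ab-c^{2}=\|\bF\bs\times\bF\bmm\|^{2}$ need not equal $(\det\bF)^{2}$.

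One notational point. Your display $\|\Fe\|^{2}=\|\Fe\bs\|^{2}+\|\Fe\bmm\|^{2}$ holds only for the in-plane block; the full norm carries an extra $+1$ from $\bm{e}_3$. This is consistent with your earlier reduction to $\tfrac{\mu}{2}(\|\bm{A}_e\|^{2}-2)$, but it should be written for $\bm{A}_e$.
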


\begin{proof}[Proof of Proposition~\textup{\ref{prop:condensed}}]
Since the shear is isochoric, $\det\bF=1$ and hence $J=\det\Fe=1$ by
\eqref{eq:J}: the volumetric terms $-2\ln J$ and $(J-1)^{2}$ of \eqref{eq:psi}
vanish, and the homogeneous elastic energy reduces to $\tfrac{\mu}{2}(I_1-3)$
with $I_1=\tr\Ce$. From $\bF=\bI+\gamma\,\bm{e}_1\otimes\bm{e}_2$ the total
Cauchy-Green tensor is
\begin{equation}\label{eq:C-shear}
  \bC=\bF^{\!\top}\bF=
  \begin{pmatrix}1&\gamma&0\\[2pt]\gamma&1+\gamma^{2}&0\\[2pt]0&0&1\end{pmatrix},
\end{equation}
and contracting with $\bs=\tfrac1{\sqrt2}(1,-1,0)$ and
$\bmm=\tfrac1{\sqrt2}(1,1,0)$ gives the three scalars
\begin{equation}\label{eq:three-scalars}
  \bs\cdot\bC\bs=\frac{\gamma^{2}-2\gamma+2}{2},\qquad
  \bs\cdot\bC\bmm=-\frac{\gamma^{2}}{2},\qquad
  \tr\bC=3+\gamma^{2}.
\end{equation}
Expanding $\tr\Ce$ from $\Fe=\bF(\bI-\beta\,\bs\otimes\bmm)$ and
$\bs\cdot\bmm=0$,
\begin{equation*}
  \tr\Ce=\tr\bC-2\beta\,(\bs\cdot\bC\bmm)+\beta^{2}(\bs\cdot\bC\bs),
\end{equation*}
a strictly convex quadratic in $\beta$ (its leading coefficient
$\bs\cdot\bC\bs$ is positive). Its minimizer and minimum value are
\begin{equation}\label{eq:beta-star-shear}
  \beta^{\ast}=\frac{\bs\cdot\bC\bmm}{\bs\cdot\bC\bs}
  =-\frac{\gamma^{2}}{\gamma^{2}-2\gamma+2},\qquad
  \min_{\beta}\tr\Ce=\tr\bC-\frac{(\bs\cdot\bC\bmm)^{2}}{\bs\cdot\bC\bs}.
\end{equation}
Substituting \eqref{eq:three-scalars} into \eqref{eq:beta-star-shear},
\begin{equation}\label{eq:trCe-min}
\begin{split}
  \min_{\beta}\tr\Ce-3
  &=\gamma^{2}-\frac{\gamma^{4}/4}{(\gamma^{2}-2\gamma+2)/2}
  =\frac{2\gamma^{2}(\gamma^{2}-2\gamma+2)-\gamma^{4}}{2(\gamma^{2}-2\gamma+2)}\\
  &=\frac{\gamma^{2}(\gamma-2)^{2}}{2(\gamma^{2}-2\gamma+2)},
\end{split}
\end{equation}
where the last step uses
$2\gamma^{2}(\gamma^{2}-2\gamma+2)-\gamma^{4}
=\gamma^{4}-4\gamma^{3}+4\gamma^{2}=\gamma^{2}(\gamma-2)^{2}$. The
condensed energy is $e(\gamma)=\tfrac{\mu}{2}\bigl(\min_{\beta}\tr\Ce-3\bigr)$,
which is \eqref{eq:condensed}. 
\end{proof}

\section{Relaxation: lamellar microstructure}\label{sec:relax}
Recall \citep[see][]{dacorogna2008direct} that a continuous $W$ is \emph{quasiconvex} at $\bF$ if
$\int_{D}W(\bF+\nabla\bm\varphi)\,d\bx\ge|D|\,W(\bF)$ for every
$\bm\varphi\in C^{\infty}_{c}(D;\R^{3})$; quasiconvexity of the (condensed)
integrand is equivalent to weak lower semicontinuity of the functional. We show
that the condensed energy \eqref{eq:condensed} fails this test, so that
minimizing sequences develop the lamellar microstructure.

\begin{proposition}[Double well and non-quasiconvexity]\label{prop:nonqc}
The condensed energy \eqref{eq:condensed} is a double well: $e(0)=e(2)=0$,
$e(1)=\mu/4>0$, and $e''(1)=-\tfrac{3\mu}{2}<0$. Its two wells $\gamma=0$ and
$\gamma=2$ are rank-one connected and carry equal (minimal) energy; the simple
laminate mixing them attains the mean shear $\gamma$ at zero energy. Hence for
every $\gamma\in(0,2)$ the homogeneous state is not a minimizer, and the
condensed functional is not weakly lower semicontinuous - i.e.\ the integrand
is not quasiconvex in the sense of \citet{dacorogna2008direct}. Physically, the crystal
reaches the same imposed shear at strictly lower energy by alternating between two
equally favorable, geometrically compatible shear states - the origin of the
laminated grain pattern.
\end{proposition}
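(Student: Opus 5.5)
The plan is to verify the scalar facts by direct computation from \eqref{eq:condensed}, then build an explicit laminate and use it to show that the infimum of the condensed functional lies strictly below the homogeneous value. First, $e(0)=e(2)=0$ is immediate from the numerator $\gamma^{2}(\gamma-2)^{2}$. The denominator $\gamma^{2}-2\gamma+2=(\gamma-1)^{2}+1$ is at least $1$, so $e\ge0$ everywhere and $e(\gamma)>0$ for every $\gamma\in(0,2)$. For the curvature I will substitute $t=\gamma-1$. Then $\gamma(\gamma-2)=t^{2}-1$ and the denominator is $t^{2}+1$, so
\begin{equation*}
  e=\frac{\mu}{4}\,\frac{(t^{2}-1)^{2}}{t^{2}+1}=\frac{\mu}{4}\bigl(1-3t^{2}+O(t^{4})\bigr).
\end{equation*}
This gives $e(1)=\mu/4$ and $e''(1)=-\tfrac{3\mu}{2}$ at once.

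Second, rank-one compatibility. The two wells correspond to $\bF_A=\bI$ and $\bF_B=\bI+2\,\bm{e}_1\otimes\bm{e}_2$. Their difference $2\,\bm{e}_1\otimes\bm{e}_2$ has rank one, with layer normal $\bm{e}_2$. By \eqref{eq:beta-star-shear} the optimal slips are $\beta^{\ast}=0$ and $\beta^{\ast}=-2$, and at these slips $\Fe$ is a rotation in each well. For a mean shear $\gamma\in(0,2)$ I take layers normal to $\bm{e}_2$ with volume fractions $1-\gamma/2$ (gradient $\bF_A$) and $\gamma/2$ (gradient $\bF_B$). The resulting $\by$ is continuous and piecewise affine, with average gradient $\bI+\gamma\,\bm{e}_1\otimes\bm{e}_2$. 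Its condensed energy density vanishes pointwise, since $J=1$ in both layers and the shape term is zero at the wells.

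Third, I pass to non-lower-semicontinuity, which I expect to be the only delicate step because of the hard-device data. I will rescale the laminate to period $\varepsilon$, writing $\by_\varepsilon=\bF\bx+\varepsilon\,\phi(x_2/\varepsilon)\,\bm{e}_1$ with $\phi$ periodic, Lipschitz and of zero mean, and $\phi'\in\{-\gamma,\,2-\gamma\}$. To meet $\by=\bF\bx$ on $\partial V$ I multiply the perturbation by a cutoff $\chi$ equal to $1$ away from a boundary strip of width $\delta$. Then
\begin{equation*}
  \nabla\by_\varepsilon=\bI+(\gamma+\chi\phi')\,\bm{e}_1\otimes\bm{e}_2+\varepsilon\phi\,\bm{e}_1\otimes\nabla\chi ,
\end{equation*}
so $\det\nabla\by_\varepsilon=1+\varepsilon\phi\,\partial_1\chi$, which equals $1+O(\varepsilon/\delta)$. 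Fixing $\delta$ and sending $\varepsilon\to0$ keeps $J$ uniformly near $1$. The logarithmic volumetric term, and the other terms, then stay bounded in the strip, whose contribution is $O(\delta)$; afterwards I send $\delta\to0$. A diagonal sequence satisfies $\by_k\rightharpoonup\bF\bx$ weakly in $H^{1}$ with condensed energy tending to $0$. This is strictly less than $|V|\,e(\gamma)$, the energy of the weak limit. Hence the condensed functional is not weakly lower semicontinuous and the homogeneous state is not a minimizer. Because quasiconvexity is equivalent to weak lower semicontinuity, as recalled before the statement, the condensed integrand $\bF\mapsto\min_\beta\psi_{\mathrm{el}}$ is not quasiconvex at $\bF$.
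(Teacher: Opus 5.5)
Your proposal is correct and follows the paper's skeleton: evaluate $e$ directly, note the rank-one connection $\bF_B-\bF_A=2\,\bm{e}_1\otimes\bm{e}_2$, and mix the wells in a simple laminate with volume fraction $\gamma/2$.

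\textbf{Curvature.} The paper applies the quotient rule to $e=\tfrac{\mu}{4}f/g$, using $f'(1)=g'(1)=0$, $f''(1)=-4$ and $g''(1)=2$. Your substitution $t=\gamma-1$ is quicker. Through $(\gamma-1)^2+1\ge1$ it also makes explicit the positivity of $e$ on $(0,2)$, which the paper invokes without stating it.

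\textbf{Boundary data and weak lower semicontinuity.} Here your argument is more careful than the paper's. The paper asserts that the laminate has ``the same average (hence the same boundary data)''. This is not literally true on $V$: a layered field $\bx+u(x_2)\,\bm{e}_1$ differs from $\bF\bx$ on the faces $x_1=0,1$ unless $u$ is linear. The paper also passes from ``the infimum lies below $e(\gamma)$'' to failure of weak lower semicontinuity without exhibiting a sequence that converges weakly to the homogeneous state. Your construction supplies both missing pieces:
\begin{itemize}
\item the period-$\varepsilon$ laminate with a cutoff of width $\delta$;
\item the identity $\det\nabla\by_\varepsilon=1+\varepsilon\phi\,\partial_1\chi$, which keeps the $-\ln J$ term bounded in the strip;
\item the diagonal choice $\varepsilon_k\ll\delta_k\to0$.
\end{itemize}
Your version therefore proves both conclusions under the hard-device data.

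\textbf{A possible shortcut.} You could drop the appeal to the equivalence of quasiconvexity and weak lower semicontinuity. Its standard form assumes finite-valued integrands with $p$-growth, which the $-\ln J$ term violates. Set $W_c(\bF):=\min_\beta\psi_{\mathrm{el}}(\bF(\bI-\beta\,\bs\otimes\bmm))$. A single member of your sequence with $\delta$ small is already a compactly supported Lipschitz perturbation of $\bF\bx$ with $\int_V W_c(\nabla\by)\,d\bx<|V|\,e(\gamma)$. That violates the defining inequality directly. Mollifying it is harmless: only the first component is perturbed, so $\det\nabla\by=1+\partial_1 w$ stays near $1$.
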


\begin{proof}
Direct substitution gives $e(0)=e(2)=0$ and $e(1)=\mu/4>0$. Writing
$e=\tfrac{\mu}{4}f/g$ with $f=\gamma^{2}(\gamma-2)^{2}$ and
$g=\gamma^{2}-2\gamma+2$, and using $f'(1)=g'(1)=0$, $f''(1)=-4$, $g''(1)=2$,
\begin{equation}\label{eq:eprime2}
  e''(1)=\frac{\mu}{4}\Bigl(\frac{f''(1)}{g(1)}-\frac{f(1)\,g''(1)}{g(1)^{2}}\Bigr)
  =\frac{\mu}{4}(-4-2)=-\frac{3\mu}{2}<0 ,
\end{equation}
so $e$ has strictly positive values between two equal-height zeros at
$\gamma=0,2$.

Next we show that the two wells are rank-one connected. They correspond to
$\bF_{\!A}=\bI$ ($\gamma=0$) and
$\bF_{\!B}=\bI+2\,\bm{e}_1\otimes\bm{e}_2$ ($\gamma=2$), whose difference
\begin{equation}\label{eq:rank-one}
  \bF_{\!B}-\bF_{\!A}=2\,\bm{e}_1\otimes\bm{e}_2
\end{equation}
is a rank-one tensor $\bm{a}\otimes\bm{n}$ with $\bm{a}=2\bm{e}_1$,
$\bm{n}=\bm{e}_2$. This is the Hadamard compatibility condition, so $\bF_{\!A}$
and $\bF_{\!B}$ can be joined across a planar interface of normal $\bm{n}$ by a
continuous, piecewise-affine deformation.

Finally, such a laminate beats the homogeneous state. For $\gamma\in(0,2)$ set
$\theta=\gamma/2\in(0,1)$ and form the layered field that
equals $\bF_{\!A}$ on a volume fraction $1-\theta$ and $\bF_{\!B}$ on $\theta$,
stacked along $\bm{n}=\bm{e}_2$. Its mean deformation gradient is
\begin{equation}\label{eq:laminate-mean}
  (1-\theta)\bF_{\!A}+\theta\bF_{\!B}
  =\bI+2\theta\,\bm{e}_1\otimes\bm{e}_2
  =\bI+\gamma\,\bm{e}_1\otimes\bm{e}_2 ,
\end{equation}
i.e.\ it satisfies the same average (hence the same boundary data
\eqref{eq:bc}), while each layer sits in a well, so its elastic energy is
$(1-\theta)e(0)+\theta\,e(2)=0$. Since $e(\gamma)>0$ by Step~1, the laminate has
strictly smaller energy than the homogeneous state. Therefore the infimum over
admissible fields lies below the homogeneous value $e(\gamma)$: the homogeneous state is not a minimizer and the condensed functional is not weakly lower semicontinuous, i.e.\ its integrand is not quasiconvex \citep{dacorogna2008direct}. 
\end{proof}

\begin{remark}[Grains as uniform slip combined with rigid rotation]\label{rem:grains}
The laminate admits a direct physical reading. In the $A$ layers
$\beta=0$ and $\Fe=\bI$: the lattice is undeformed. In the $B$ layers the
slip is uniform, $\beta=\beta_B$, and the elastic distortion is a
\emph{pure rotation} through the misorientation angle: at the well,
$\Fe=\bF_B(\bI-\beta_B\,\bs\otimes\bmm)$ satisfies
$(\Fe)^{\!\top}\Fe=\bI$ - for the representative $\varphi=-\pi/4$,
$\beta_B=-2$, explicitly
$\Fe=\bigl(\begin{smallmatrix}0&1\\-1&0\end{smallmatrix}\bigr)$, the
rotation through $|\theta|=\pi/2$. Each newly formed grain is therefore obtained
from the parent crystal by a nearly uniform plastic slip combined with a
rigid-body rotation of the lattice; both leave the lattice locally
stress-free, all incompatibility is pushed into the interfaces, and the
misorientation $\theta=2\varphi+\pi$ between neighboring grains is carried
entirely by the geometrically necessary boundaries (Section~\ref{sec:GNB}). Since $\beta_B=-\gamma_B$ for every $\varphi$, the laminate with fraction $\gamma/\gamma_B$ carries the mean slip $\langle \beta\rangle=-\gamma $ throughout the window.
\end{remark}

\begin{remark}[Algebraic economy of the CG condensed energy]\label{rem:degree}
Because the CG shape term $\tfrac{\mu}{2}(I_1-3)$ is \emph{linear} in $\Ce$, it
is only quadratic in $\beta$; eliminating $\beta$ leaves the degree-four rational
\eqref{eq:condensed}. The same linearity in $\Ce$ - there in the incompressible
neo-Hookean setting - underlies the analytical tractability of the laminate
relaxation of \citet{kochmann2011the}. The Saint-Venant--Kirchhoff energy
$\mu\,\tr((\bE^{e})^{2})$, $\bE^{e}=\tfrac12(\Ce-\bI)$, is quadratic in $\Ce$,
hence quartic in $\beta$, and condenses to the degree-eight rational of
\citet[Eq.~(31)]{koster2015aformation},
\[
  e_{\mathrm{SVK}}(\gamma)=\frac{\mu\,\gamma^{2}(\gamma-2)^{2}
  \bigl(\gamma^{4}-4\gamma^{3}+8\gamma^{2}-8\gamma+8\bigr)}
  {16\,(\gamma^{2}-2\gamma+2)^{2}} .
\]
Both energies share the same wells $\gamma\in\{0,2\}$ and the same optimal slip
$\beta^{\ast}=-\gamma^{2}/(\gamma^{2}-2\gamma+2)$, yet the CG well is
algebraically simpler - half the degree, with a lower barrier
$e(1)=\mu/4$ against $e_{\mathrm{SVK}}(1)=5\mu/16$. 
\end{remark}

\section{Geometrically necessary boundaries: finite thickness}\label{sec:GNB}
The relaxation of Section~\ref{sec:relax} produces, in the bare condensed model,
an infinitely fine laminate with sharp interfaces. The dislocation-density
gradient term restores a finite length scale, as follows.

\begin{remark}[The dislocation gradient sets the grain size]\label{rem:micro}
Two regimes must be distinguished. \emph{Without} the dislocation gradient term -
that is, for the bare condensed functional $\int e(\gamma)\,d\bx$ - the infimum is
not attained: minimizing sequences form ever finer laminates between the wells
$\gamma=0$ and $\gamma=2$, the layer spacing collapses to zero, and \emph{no grain size is defined}. It is precisely the gradient term
$\tfrac{c_2}{2}(\nabla\beta\cdot\bs)^{2}+q_c|\nabla\beta\cdot\bs|$, which carries
the dislocation density $\rho=\tfrac1b|\nabla\beta\cdot\bs|$, that penalizes the
transition layers, arrests the refinement, and selects a \emph{finite} wall
thickness - the geometrically necessary boundaries \citep{koster2015bformation}. The grain pattern is thus set by a competition: the double well drives mixing toward
finer layers, while the gradient penalty favors fewer, sharper walls, and the
balance fixes the wall thickness and the number of grains (the thickness in
\eqref{eq:hthickness} below, the grain count in Section~\ref{sec:method}). 
\end{remark}

The energy stored in a boundary, per unit boundary length, is the sharp-wall
value
\begin{equation}\label{eq:gammaG}
  \gamma_G=q_c\,|\beta_B| ,\qquad
  \beta_B=2\cot\varphi=-2\tan(\theta/2),
\end{equation}
where $\beta_B$ is the slip at the second well $\gamma_B=-2\cot\varphi$, at which
the elastic distortion reduces to a pure rotation \citep{koster2015aformation};
$\gamma_G$ is thus a closed-form quantity of the model. For small misorientations
$|\beta_B|\approx\theta$, so the boundary energy grows linearly,
$\gamma_G\approx\mu k\ell\,\theta$. The lattice
misorientation across the boundary is $\theta=2\varphi+\pi$; in the degenerate
orientation $\varphi=-\pi/4$ the rotation $\theta=\pi/2$ maps the cubic lattice
onto itself, so the boundary is coherent, carrying zero effective
misorientation, while for general $\varphi$ the boundary is incoherent.
Both $\gamma_G$ and the dislocation content are computed as functions of $\theta$
in Section~\ref{sec:method}.

Within a boundary layer the slip varies only across the layer, $\beta=\beta(x_2)$,
and the deformation gradient follows the rank-one (Hadamard) path
$\bF=\bI+\bm{a}(x_2)\otimes\bm{e}_2$ connecting the two wells. Since the elastic
energy contains no gradient of $\bm{a}$, the layer potential is obtained by
minimizing pointwise over $\bm{a}$,
\begin{equation}\label{eq:pdef}
  p(\beta)=\min_{\bm{a}\in\R^2}\ \frac{1}{\mu}\,
  \psi_{\mathrm{el}}\bigl((\bI+\bm{a}\otimes\bm{e}_2)(\bI-\beta\,\bs\otimes\bmm)\bigr).
\end{equation}
For the Ciarlet--Geymonat energy ($\lambda=0$) both minimizations are explicit -
the $a_1$-problem is a quadratic whose discriminant collapses to unity, and the
$a_2$-problem gives $(1+a_2)^2=1/Q$ - so the potential takes the \emph{closed
form}
\begin{equation}\label{eq:pclosed}
  p(\beta)=\frac12\Bigl[\frac1Q-1+\ln Q\Bigr],\qquad
  Q(\beta)=1+\beta\sin\varphi\,\bigl(\beta\sin\varphi-2\cos\varphi\bigr).
\end{equation}
Since $Q-1$ vanishes exactly at
$\beta=0$ and $\beta=\beta_B=2\cot\varphi$, and $p>0$ for $Q\neq1$, the potential
vanishes precisely at the two wells - the algebraic economy of the
Ciarlet--Geymonat crystal once more: for the Saint-Venant--Kirchhoff energy the
corresponding potential must be computed numerically. 

The Euler-Lagrange equation of the layer energy
$\int\bigl[\,\mu\,p(\beta)+\tfrac12 c_2\sin^2\!\varphi\,\beta'^2
+q_c|\sin\varphi|\,|\beta'|\,\bigr]dx_2$ admits the first integral
\begin{equation}\label{eq:firstint}
  \tfrac12\,k\ell^2\sin^2\!\varphi\,\beta'^2=p(\beta),
\end{equation}
the BV term dropping out of the Beltrami identity (its two
contributions cancel), so the boundary profile and thickness are set by the
gradient term alone. Separating variables and adopting the discreteness argument
of \citet{koster2015bformation} - the profile is cut off at the smallest slip quantum
$\beta_q=b/(L|\sin\varphi|)$ carried by a single dislocation - the layer
thickness is
\begin{equation}\label{eq:hthickness}
  h=\frac{\sqrt{k}\,\ell\,|\sin\varphi|}{\sqrt2}
  \int_{\beta_B+\beta_q}^{-\beta_q}\frac{d\beta}{\sqrt{p(\beta)}}\,.
\end{equation}
Table~\ref{tab:hthickness} evaluates \eqref{eq:hthickness} with the material
parameters of \citet{koster2015bformation} ($k=10^{-6}$, $\ell=400$~nm, and their
specimen ratio $b/L=10^{-4}$),
together with the same computation for the Saint-Venant--Kirchhoff potential,
evaluated numerically. The SVK column closely reproduces the published range of
$50$-to-$6$~nm for $\theta>10^\circ$, which validates the
pipeline; the residual difference reflects the cut-off details and the
$\lambda=0$ simplification adopted here. The Ciarlet--Geymonat walls are at most
$3\%$ thicker - the lower condensed barrier ($\mu/4$ against $5\mu/16$) widens
the profile only marginally - so the grain-boundary thickness predicted by the
theory is robust under the change of elastic energy.

\begin{table}[h]\centering\small
\begin{tabular}{@{}cccc@{}}
\toprule
$\theta$ (deg) & $h_{\mathrm{CG}}$ (nm) & $h_{\mathrm{SVK}}$ (nm) & ratio\\
\midrule
10 & 48.4 & 48.3 & 1.001\\
20 & 26.4 & 26.4 & 1.002\\
30 & 18.5 & 18.4 & 1.005\\
40 & 14.3 & 14.2 & 1.009\\
50 & 11.8 & 11.6 & 1.013\\
60 & 10.0 &  9.9 & 1.018\\
70 &  8.8 &  8.6 & 1.024\\
80 &  7.8 &  7.6 & 1.029\\
\bottomrule
\end{tabular}
\caption{Boundary-layer thickness $h(\theta)$ from \eqref{eq:hthickness} with the
material parameters of \citet{koster2015bformation} ($k=10^{-6}$, $\ell=400$~nm,
$b/L=10^{-4}$): the Ciarlet--Geymonat closed form \eqref{eq:pclosed} against the
numerically evaluated Saint-Venant--Kirchhoff potential. The SVK column closely
reproduces the published $50$-to-$6$~nm range; the Ciarlet--Geymonat walls are at
most $3\%$ thicker. }
\label{tab:hthickness}
\end{table}

\begin{remark}[Phase-field analogy]
The regularized functional has the structure of a phase-field model of
Cahn--Hilliard--Modica--Mortola type \cite{cahn1958free,modica1987gradient}: the slip $\beta$ acts as a non-conserved order parameter whose preferred values $\beta=0$ and $\beta=\beta_B$ label the two variants; the condensed energy supplies the double-well bulk term; and the dislocation energy
$q_c|\nabla\beta\cdot\bs| + \tfrac{c_2}{2}(\nabla\beta\cdot\bs)^2$ plays the
role of the interfacial energy. The first integral \eqref{eq:firstint} is the equipartition
of bulk and gradient energy across the interface -- the optimal-profile
construction of phase-field theory -- and \eqref{eq:hthickness} the corresponding
interface-width formula. Two features distinguish the present model. First,
the interfacial energy is strongly anisotropic: only gradients along $\bs$
are penalized, so interfaces normal to $\bmm$ are energetically free -- the
slip laminates of Section~8. Second, it is physical rather than
phenomenological: $q_c=\mu k\ell$ and $c_2=\mu k\ell^2$ are the line and
interaction energies of the geometrically necessary dislocations, so the
boundary energy $\gamma_G$ and the wall thickness are predictions in terms of
dislocation parameters rather than fitted interface constants. In particular,
unlike phase-field models of grain boundaries in which the lattice
orientation is an independent order parameter with a postulated boundary
energy \cite{kobayashi2000continuum}, here the misorientation, the boundary energy and the thickness all emerge from the slip kinematics and the dislocation energetics.
\end{remark}

\section{Numerical solution: a block-coordinate finite element method}\label{sec:method}

\subsection{Nondimensionalization}\label{sec:nondim}
Recall $\rho/\rho_s=\ell\,|\nabla\beta\cdot\bs|$ with $\ell=1/(b\rho_s)$ from
\eqref{eq:disl-moduli}. Scaling positions by the specimen size $L$
($\tilde\bx=\bx/L$) and the energy by $\mu L^{2}$ (plane strain), the
functional becomes
\begin{equation}\label{eq:nondim}
  \frac{I}{\mu L^{2}}=\int_{\tilde V}\Bigl[
  \tfrac12(I_1-3-2\ln J)+\tfrac{\tilde\lambda}{2}(J-1)^{2}
  +k\eta\,|\tilde\nabla\beta\cdot\bs|
  +\tfrac{k\eta^{2}}{2}\,(\tilde\nabla\beta\cdot\bs)^{2}\Bigr]d\tilde\bx,
\end{equation}
with $\tilde\lambda=\lambda/\mu$ and $\eta=\ell/L$.
The dimensionless internal length $\eta=\ell/L$ sets the grain-boundary
thickness ($\eta\to0$ drives finer microstructure); the BV and gradient moduli
become $k\eta$ and $k\eta^{2}$. Henceforth we use these variables and drop the
tildes.

\subsection{Finite element discretization}\label{sec:fem}
We use a finite element discretization, which handles general geometries and
boundary conditions directly (spectral/FFT schemes, as in \citet{moulinec1998numerical},
are an efficient alternative on periodic cells). In plane strain the fields
depend on $(x_1,x_2)$. We mesh $V$ and interpolate both the deformation
$\by$ and the plastic slip $\beta$ by continuous Lagrange elements,
\begin{equation}\label{eq:fe-interp}
  \by\approx\bN_{\!y}\,\hat\by,\qquad \beta\approx\bN\bbeta,
\end{equation}
with $\bN$ the Lagrange shape functions - bilinear $Q_1$ quadrilaterals in the
computations of Section~\ref{sec:results} - and $\bN_{\!y}$ their
vector-valued counterpart; gradients use the isoparametric map through
$\nabla\bN$. The essential data are imposed nodally: $\by=\bF\bx$ and $\beta=0$
on $\partial V$. All integrals - the residual \eqref{eq:Y-weak}, the tangent
\eqref{eq:tangent-form}, and the $\beta$-system \eqref{eq:blockbeta} - are
evaluated by Gauss quadrature exact for the element order. The same $\bN$ enters
the $\beta$-block \eqref{eq:blockbeta}, so no separate interpolation is
introduced.

\subsection{The slip block: a convex non-smooth solve}\label{sec:blockbeta}
At a fixed deformation $\by$ - hence a fixed total Cauchy-Green tensor
$\bC=\bF^{\!\top}\bF$ - Proposition~\ref{prop:convex-beta} makes the energy in
$\beta$ convex: a strictly convex quadratic plus the non-smooth BV term. We
derive the resulting solve.

By Proposition~\ref{prop:convex-beta}(ii) the volumetric terms are constant in
$\beta$; only the shape and dislocation parts vary. Since
$\Ce=(\bI-\beta\,\bmm\otimes\bs)\,\bC\,(\bI-\beta\,\bs\otimes\bmm)$,
\begin{equation}\label{eq:trCe}
  \tr\Ce=\tr\bC-2\beta\,(\bs\cdot\bC\bmm)+\beta^{2}(\bs\cdot\bC\bs).
\end{equation}
With $d:=\nabla\beta\cdot\bs$ and the moduli \eqref{eq:disl-moduli}, the
$\beta$-dependent energy density is
\begin{equation}\label{eq:eb-density}
  \frac{\mu}{2}\bigl[\beta^{2}(\bs\cdot\bC\bs)-2\beta(\bs\cdot\bC\bmm)\bigr]
  +q_c\,|d|+\frac{c_2}{2}\,d^{2}.
\end{equation}
Because of the non-smooth $|d|$, stationarity of \eqref{eq:functional} in $\beta$
($\beta=0$ on $\partial V$) is the inclusion: for all admissible $\delta\beta$,
\begin{equation}\label{eq:eb-weak}
  \int_{V}\!\Bigl[\mu(\bs\cdot\bC\bs)\,\beta-\mu(\bs\cdot\bC\bmm)\Bigr]\delta\beta
  +\bigl[c_2\,d+q_c\,\xi\bigr](\nabla\delta\beta\cdot\bs)\,d\bx=0,
\end{equation}
with $\xi\in\partial|d|$, i.e.\ $\xi=\operatorname{sign}(d)$ for $d\neq0$ and
$\xi\in[-1,1]$ where $d=0$.
Equation \eqref{eq:eb-weak} carries a direct physical meaning: its strong form is
the balance
\begin{equation}\label{eq:schmid}
  \tau_r=\varsigma,\qquad
  \tau_r:=\mu\bigl[\bs\cdot\bC\bmm-\beta\,(\bs\cdot\bC\bs)\bigr],\qquad
  \varsigma:=-\nabla\cdot\bigl[(c_2\,d+q_c\,\xi)\,\bs\bigr],
\end{equation}
between the resolved shear stress
$\tau_r=-\partial\psi_{\mathrm{el}}/\partial\beta$ on the slip system and the
back stress $\varsigma$ generated by the geometrically necessary dislocations -
the counterpart of the Schmid stress and back stress of
\citet{koster2015aformation}. The non-smooth term acts as a threshold of
rate-independent type: where the slip gradient vanishes, $\xi$ adjusts within
$[-1,1]$, and dislocation walls form only where the resolved stress can no
longer be balanced below the line-tension threshold set by $q_c$.
To solve this inclusion we introduce the auxiliary field
$d\approx\nabla\beta\cdot\bs$ and split by the alternating direction method of
multipliers (ADMM). With penalty $r>0$ and dual $\bm{y}_d$, one alternates:
\begin{itemize}\setlength{\itemsep}{3pt}\setlength{\topsep}{3pt}
  \item[(a)] \emph{$\beta$-update (SPD linear solve).} Interpolating
  $\beta=\bN\bbeta$ and assembling the smooth part,
  \begin{equation}\label{eq:blockbeta}
  \begin{split}
    \Bigl[\,\mu\,(\bs\cdot\bC\bs)\,&\bN^{\!\top}\bN
    +(c_2{+}r)(\nabla\bN\cdot\bs)^{\!\top}(\nabla\bN\cdot\bs)\Bigr]\bbeta\\
    &=\mu\,(\bs\cdot\bC\bmm)\,\bN^{\!\top}
    +(\nabla\bN\cdot\bs)^{\!\top}(r\,d-\bm{y}_d);
  \end{split}
  \end{equation}
  \item[(b)] \emph{$d$-update (soft threshold).}
  $d\leftarrow\mathcal{S}_{q_c/r}\bigl(\nabla\beta\cdot\bs+\bm{y}_d/r\bigr)$,
  with $\mathcal{S}_\kappa(z)=\operatorname{sign}(z)\max(|z|-\kappa,0)$;
  \item[(c)] \emph{dual update.}
  $\bm{y}_d\leftarrow\bm{y}_d+r\,(\nabla\beta\cdot\bs-d)$.
\end{itemize}
The matrix in (a) is symmetric positive definite
(Proposition~\ref{prop:convex-beta}); each sweep is one SPD solve plus a
pointwise threshold, and iterating solves the convex $\beta$-block. In the limit
$q_c\to0$ the BV term drops and (a) alone becomes the single linear solve.

\subsection{The deformation block: Levenberg-regularized Newton}\label{sec:blocky}
At a fixed plastic slip $\beta$ the dislocation term of \eqref{eq:psi} depends on
$\nabla\beta$ alone and is therefore constant; the $\by$-block minimizes the
elastic energy
\begin{equation}\label{eq:Iel}
  I_{\mathrm{el}}[\by]=\int_{V}\psi_{\mathrm{el}}(\Fe)\,d\bx,\qquad
  \Fe=\nabla\by\,(\Fp)^{-1},\quad (\Fp)^{-1}=\bI-\beta\,\bs\otimes\bmm,
\end{equation}
over $\by$ with $\by=\bF\bx$ on $\partial V$. As $\psi_{\mathrm{el}}$ is
polyconvex but \emph{not} convex (Remark~\ref{rem:notconvex}), $I_{\mathrm{el}}$
is nonconvex in $\by$; we therefore globalize Newton's method by a Levenberg
regularization.

The linearization uses the standard identities
$\partial I_1/\partial\Fe=2\Fe$, Jacobi's formula
$\partial J/\partial\Fe=J(\Fe)^{-\top}$ (hence
$\partial(\ln J)/\partial\Fe=(\Fe)^{-\top}$) and the variation of the inverse
transpose
$\delta[(\Fe)^{-\top}]=-(\Fe)^{-\top}(\delta\Fe)^{\!\top}(\Fe)^{-\top}$,
together with the pull-back
$\delta\Fe=\nabla(\delta\by)\,(\Fp)^{-1}$, which follows from \eqref{eq:Iel} at
fixed $\beta$. Differentiating
$\psi_{\mathrm{el}}=\frac{\mu}{2}(I_1-3-2\ln J)+\frac{\lambda}{2}(J-1)^2$
yields the first Piola-Kirchhoff stress
\begin{equation}\label{eq:Pe}
\begin{split}
  \Pe:=\frac{\partial\psi_{\mathrm{el}}}{\partial\Fe}
  &=\mu\bigl(\Fe-(\Fe)^{-\top}\bigr)+\lambda J(J-1)(\Fe)^{-\top}\\
  &=\mu\Fe+\bigl[\lambda J(J-1)-\mu\bigr](\Fe)^{-\top}.
\end{split}
\end{equation}
In particular $\Pe=\bm0$ at $\Fe=\bI$, so the reference state is stress-free.
By the pull-back identity, stationarity of \eqref{eq:Iel} reads, for all
$\delta\by$ vanishing on $\partial V$,
\begin{equation}\label{eq:Y-weak}
  G(\by)[\delta\by]
  =\int_{V}\Pe:\bigl(\nabla(\delta\by)\,(\Fp)^{-1}\bigr)\,d\bx
  =\int_{V}\bigl(\Pe\,(\Fp)^{-\top}\bigr):\nabla(\delta\by)\,d\bx=0 .
\end{equation}
Differentiating once more, the elastic tangent
$\mathbb{A}=\partial^{2}\psi_{\mathrm{el}}/\partial\Fe\,\partial\Fe
=\partial\Pe/\partial\Fe$ has components (with $(\Fe)^{-1}$ the inverse)
\begin{equation}\label{eq:Amod}
\begin{split}
  \mathbb{A}_{iJkL}
  &=\mu\,\delta_{ik}\delta_{JL}
  +\lambda(2J-1)J\,(\Fe)^{-1}_{Ji}(\Fe)^{-1}_{Lk}\\
  &\quad+\bigl[\mu-\lambda J(J-1)\bigr](\Fe)^{-1}_{Jk}(\Fe)^{-1}_{Li},
\end{split}
\end{equation}
and, by the pull-back identity, the tangent bilinear form in $\by$ is
\begin{equation}\label{eq:tangent-form}
  DG(\by)[\delta\by,\bm{w}]
  =\int_{V}\bigl(\nabla\bm{w}\,(\Fp)^{-1}\bigr):\mathbb{A}:
  \bigl(\nabla(\delta\by)\,(\Fp)^{-1}\bigr)\,d\bx .
\end{equation}
Interpolating $\by=\bN_{\!y}\hat{\by}$ and assembling \eqref{eq:Y-weak} and
\eqref{eq:tangent-form} gives the residual vector $\bm{g}(\hat{\by})$ and the
tangent stiffness $\bK(\hat{\by})$. Because $\bK$ may be indefinite where
$\psi_{\mathrm{el}}$ is nonconvex, we take the Levenberg-regularized step
\begin{equation}\label{eq:LM}
  \bigl(\bK(\hat{\by})+\lambda_{\mathrm{LM}}\bM\bigr)\,\delta\hat{\by}
  =-\,\bm{g}(\hat{\by}),\qquad \bM\ \text{SPD},\ \ \lambda_{\mathrm{LM}}\ge0,
\end{equation}
with $\bM$ a fixed symmetric positive-definite metric (mass matrix). The scalar
$\lambda_{\mathrm{LM}}$ is chosen adaptively - increased until the update
lowers $I_{\mathrm{el}}$, decreased after each successful step. As
$\lambda_{\mathrm{LM}}\!\to\!0$, \eqref{eq:LM} recovers the quadratically
convergent Newton step; as $\lambda_{\mathrm{LM}}\!\to\!\infty$ it tends to the
preconditioned gradient step $-\lambda_{\mathrm{LM}}^{-1}\bM^{-1}\bm{g}$,
guaranteeing energy descent even in the nonconvex regime.

\subsection{Detecting the onset of microstructure}\label{sec:bifurcation}
Along the loading path parametrized by $\gamma$, the homogeneous state is a
critical point of the alternating map; because the functional is
non-quasiconvex (Proposition~\ref{prop:nonqc}), this homogeneous state ceases to
be a minimizer at some $\gamma_c$ beyond which a laminate has lower energy. Because the homogeneous state remains a fixed point of the alternating iteration, the laminate is reached only by breaking its symmetry. We use an energy-based perturbation test.

\smallskip
\noindent\textbf{Perturbation test.}
At each converged state $(\by,\beta)$, of nondimensional total energy
$E:=I[\by,\beta]$, we perturb the slip on the interior nodes,
$\beta\leftarrow\beta+\zeta$ with $\zeta$ random of amplitude $\varepsilon$, and
re-run the alternating solver to a new converged state $(\by',\beta')$ of energy
$E'$. The perturbed state is accepted when it lowers the energy, and a genuine
symmetry-breaking bifurcation is recorded when the drop is significant \emph{and}
the slip field changes:
\begin{equation}\label{eq:bif-test}
  E-E'>c\,(1+|E|)\qquad\text{or}\qquad
  \frac{\|\beta'-\beta\|}{\|\beta\|}>\delta ,
\end{equation}
with small fixed tolerances $c,\delta>0$. A sub-threshold perturbation that does
not lower the energy is rejected and the
homogeneous branch is retained. Sweeping $\gamma$, the first accepted perturbation brackets the onset $\gamma_c$.

\begin{remark}[Stability interpretation]\label{rem:schur}
The exact stability boundary is where the reduced Hessian (Schur complement)
$\bm{S}=\bK_{yy}-\bK_{y\beta}\bK_{\beta\beta}^{-1}\bK_{\beta y}$ loses positivity
($\bK_{\beta\beta}\succ0$ by Proposition~\ref{prop:convex-beta}). The
perturbation test \eqref{eq:bif-test} probes this boundary without assembling
$\bm{S}$: a perturbation can lower the energy precisely when $\bm{S}$ has a
non-positive direction. Monitoring the inertia of $\bm{S}$ directly - an
integer count from a Bunch-Kaufman factorization, robust to round-off and
certifying a \emph{simple} crossing - is a sharper alternative.
\end{remark}

\subsection{The alternating scheme}\label{sec:alternating}
The two blocks are combined into a block-coordinate (alternating) minimization:
the convex $\beta$-block by ADMM (Section~\ref{sec:blockbeta}) and the
$\by$-block by Levenberg-regularized Newton (Section~\ref{sec:blocky}).

\smallskip
\noindent\fbox{\parbox{0.96\linewidth}{%
\textbf{Algorithm (one load step, prescribed $\bF$).}
\begin{enumerate}\setlength{\itemsep}{2pt}\setlength{\topsep}{3pt}
  \item Carry ($\by, \beta)$ from the previous load step; affine predictor for $\by$.
  \item \textbf{repeat}
    \begin{enumerate}\setlength{\itemsep}{1pt}
      \item[(a)] \emph{$\beta$-update:} solve the convex block by the ADMM
        sweep (a)-(c) of Section~\ref{sec:blockbeta} for
        $\beta\leftarrow\argmin_\beta I[\by,\beta]$;
      \item[(b)] \emph{$\by$-update:} take Levenberg steps \eqref{eq:LM} to
        decrease $I[\by,\beta]$;
    \end{enumerate}
    \textbf{until} $|\Delta I|<\mathrm{tol}$.
  \item Apply the perturbation test \eqref{eq:bif-test} to check for onset of
    microstructure.
\end{enumerate}}}

\begin{proposition}[Monotone descent and convergence]\label{prop:descent}
The energy sequence $I_n:=I[\by^{n},\beta^{n}]$ produced by the alternating
scheme is non-increasing and convergent.
\end{proposition}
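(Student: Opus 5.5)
The plan is the standard block-coordinate descent argument: show that each half-step of the alternating scheme cannot raise the energy, then combine monotonicity with a lower bound. We will work with the discrete functional $I$ evaluated on finite element fields, and index the outer iterations of step~2 by $n$. One outer iteration is the composite map
\[
(\by^{n},\beta^{n})\mapsto(\by^{n},\beta^{n+1})\mapsto(\by^{n+1},\beta^{n+1}).
\]
The goal is the chain of inequalities
\[
I[\by^{n+1},\beta^{n+1}]\le I[\by^{n},\beta^{n+1}]\le I[\by^{n},\beta^{n}].
\]

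\emph{The $\beta$-half-step.} By Proposition~\ref{prop:convex-beta}, $\beta\mapsto I[\by^{n},\beta]$ is strictly convex, and the discrete version inherits this because the $\beta$-matrix in \eqref{eq:blockbeta} is SPD. So the update (a) returns the unique minimizer $\beta^{n+1}=\argmin_\beta I[\by^{n},\beta]$. Since $\beta^{n}$ is itself a competitor, $I[\by^{n},\beta^{n+1}]\le I[\by^{n},\beta^{n}]$ follows immediately.

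I expect this step to be the main obstacle. ADMM reaches the minimizer only in the limit, and its intermediate iterates are not energy-monotone. I would handle this in one of two ways:
\begin{itemize}
\item Idealize step~(a) as an exact minimization, which is what the algorithm box states. Convergence of ADMM for a closed proper convex objective with a linear constraint $d=\nabla\beta\cdot\bs$ is standard, so the exact minimizer is well defined.
\item Alternatively, add a safeguard that accepts the ADMM output only if it does not increase $I[\by^{n},\cdot]$, and otherwise keeps $\beta^{n}$.
\end{itemize}
Either way the inequality holds.

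\emph{The $\by$-half-step.} By construction of the Levenberg rule in Section~\ref{sec:blocky}, $\lambda_{\mathrm{LM}}$ is increased until the update lowers $I_{\mathrm{el}}$. Such a $\lambda_{\mathrm{LM}}$ exists because:
\begin{itemize}
\item as $\lambda_{\mathrm{LM}}\to\infty$ the step $-(\bK+\lambda_{\mathrm{LM}}\bM)^{-1}\bm g$ is a descent direction, since $\bm g^{\top}(\bK+\lambda_{\mathrm{LM}}\bM)^{-1}\bm g>0$ once the matrix is positive definite;
\item the step length is $O(\lambda_{\mathrm{LM}}^{-1})$, which keeps $J>0$ by continuity of $\det$;
\item $I_{\mathrm{el}}$ is smooth on the open set $\{J>0\}$.
\end{itemize}
If $\bm g=\bm0$ we simply take no step. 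At fixed $\beta^{n+1}$ the dislocation term is constant, so any decrease of $I_{\mathrm{el}}$ is a decrease of $I$. Hence $I[\by^{n+1},\beta^{n+1}]\le I[\by^{n},\beta^{n+1}]$.

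\emph{Convergence.} Chaining the two half-steps shows that $(I_n)$ is non-increasing. For a lower bound, the shape term is bounded below by $-3\mu/2$ per unit volume, as in Proposition~\ref{prop:existence}. For $\lambda>0$, $\Gamma$ is bounded below. In the computational case $\lambda=0$ with simple-shear data, I would instead use Lemma~\ref{lem:coercive}: the bound $\psi_{\mathrm{el}}\ge c_1\|\Fe\|^2-c_0\ge -c_0$ holds for all $\lambda\ge0$. The dislocation terms are nonnegative. Therefore $I_n\ge -c_0|V|$, and a bounded monotone sequence converges, which proves the claim.

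I would add a remark that this argument gives convergence of the energy values only. It does not give convergence of the iterates, nor stationarity of the limit, and the statement claims neither.
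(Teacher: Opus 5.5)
Your proposal is correct and follows essentially the same route as the paper: the chain $I[\by^{n+1},\beta^{n+1}]\le I[\by^{n},\beta^{n+1}]\le I[\by^{n},\beta^{n}]$ comes from the exact convex $\beta$-minimization (Proposition~\ref{prop:convex-beta}) and the energy-lowering Levenberg steps, and is closed by the lower bound of Lemma~\ref{lem:coercive}. Your additional care (the inexactness of ADMM, the existence of an acceptable $\lambda_{\mathrm{LM}}$ that keeps $J>0$, and invoking Lemma~\ref{lem:coercive} rather than the bound on $\Gamma$ when $\lambda=0$) makes explicit what the paper's three-line proof takes for granted.
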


\begin{proof}
As $\beta^{n+1}=\argmin_\beta I[\by^{n},\beta]$ is the exact (unique) minimizer
(Proposition~\ref{prop:convex-beta}), and each accepted Levenberg step
\eqref{eq:LM} lowers the elastic energy (Section~\ref{sec:blocky}),
\begin{equation*}
  I[\by^{n+1},\beta^{n+1}]\ \le\ I[\by^{n},\beta^{n+1}]\ \le\
  I[\by^{n},\beta^{n}] ,
\end{equation*}
so the sequence $I_n$ is non-increasing. By Lemma~\ref{lem:coercive} the
energy is bounded below, and a non-increasing sequence bounded below converges.
\end{proof}

\begin{remark}\label{rem:cvg}
Any limit point is a critical point: the $\beta$-block sits at its unique minimizer and
the $\by$-block at a Newton fixed point. Global optimality is not guaranteed -
consistent with the non-quasiconvexity (Proposition~\ref{prop:nonqc}), which is
exactly why microstructure, flagged by the perturbation test
\eqref{eq:bif-test}, emerges beyond $\gamma_c$.
\end{remark}

\subsection{Numerical results}\label{sec:results}
We apply the scheme to plane-strain simple shear of the unit cell. The slip
orientation is $\varphi=-1.4$ (misorientation $2\varphi+\pi\approx19.6^\circ$),
the second Lam\'e constant is $\lambda=0$, and the dislocation modulus and
internal length are $k=8\times10^{-5}$ and $\eta=\ell/L=1.9\times10^{-3}$ -
chosen so that several layers fit inside the cell for shears in the
laminate window $0<\gamma<\gamma_B$, $\gamma_B=-2\cot\varphi\approx0.345$. The overall shear $\gamma$ is raised in sixteen equal increments of $0.0206$, from $0.031$ to $0.34$ on a uniform $Q_1$ mesh; at each step the alternating scheme
(Section~\ref{sec:alternating}) is run to convergence and the perturbation test
\eqref{eq:bif-test} is applied. Other chosen parameters are: (i) perturbation amplitude $\varepsilon=10^{-3}$, (ii) acceptance by strict energy decrease; flagging thresholds of \eqref{eq:bif-test}, $c=10^{-7}$, $\delta = 10^{-2}$, (iii) staggered and Newton tolerances $10^{-10}$ and $10^{-9}$, respectively, with $\mathrm{rng}(1)$ fixed, so all runs are deterministic.

\paragraph{Laminate formation}
Figure~\ref{fig:res-lam} shows the converged fields at $\gamma=0.216$. The plastic
slip $\beta$ organizes into layers nearly parallel to the shear direction; the
lattice rotation $\theta_e$ (the in-plane rotation angle in the polar
decomposition of $\Fe$) alternates between bands of distinct orientation -
the \emph{grains} - and the geometrically necessary dislocation density
$\rho/\rho_s$ concentrates on the interfaces between them - the geometrically
necessary boundaries. Thus a direct energy minimization with the polyconvex
Ciarlet--Geymonat energy reproduces the lamellar grain structure of
\citet{koster2015bformation}; laminate microstructures of precisely this kind are
observed experimentally in shear-deformed copper single crystals
\citep{dmitrieva2009lamination}. Within each lamella the
computed slip is nearly constant at a well value and the lattice rotation
$\theta_e$ nearly constant, so the grains are almost stress-free regions of
uniform slip combined with rigid-body lattice rotation
(Remark~\ref{rem:grains}), with the incompatibility concentrated in the
walls - the interpretation established in
\citep{koster2015bformation,koster2015aformation}, here recovered from an
unconstrained two-dimensional minimization rather than built into a
one-dimensional ansatz. The two grain families differ by the lattice misorientation
$\theta=2\varphi+\pi$ carried by the wells - incoherent boundaries for the
present $\varphi=-1.4$ ($\theta\approx19.6^\circ$; Section~\ref{sec:GNB}). At
equilibrium the interfaces lie nearly parallel to
the shear direction, as the theory predicts; the experimental observation
\citep{koster2015aformation} that grains are initially misaligned and become
elongated and aligned with the shear direction only after several deformation
passes reflects the approach of the microstructure to this equilibrium. A
closed-form trajectory of the boundary inclination with loading is left to future
work.

\begin{figure}[h]\centering
\includegraphics[width=\linewidth]{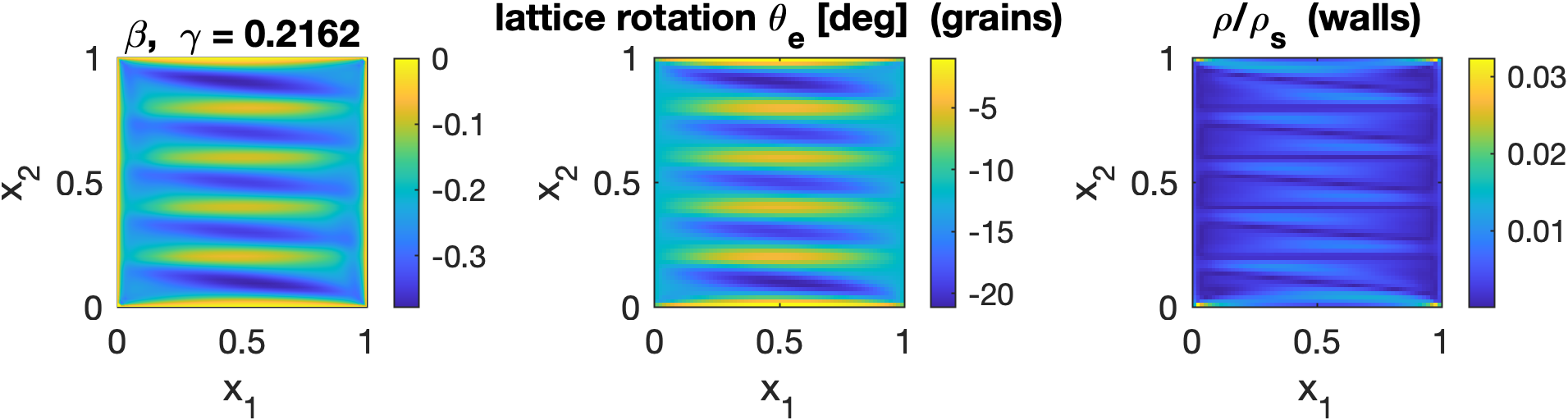}
\caption{Converged fields at $\gamma=0.216$ ($64\times64$ mesh, continuation protocol of the mesh study; four B-lamellae): plastic slip
$\beta$ (left), lattice rotation $\theta_e$ in degrees (middle, the grains), and
dislocation density $\rho/\rho_s$ (right, the walls). The layers are nearly
parallel to the shear axis.}
\label{fig:res-lam}
\end{figure}

\paragraph{Energy response}
Figure~\ref{fig:res-E}(left) plots $E(\gamma)$: it rises to a barrier near
$\gamma\approx0.19$ and then descends toward a second minimum near
$\gamma\approx0.32$ - the double-well shape of the condensed energy
(Proposition~\ref{prop:nonqc}). For the present orientation the analytic well of
\eqref{eq:condensed} sits at $\gamma_B\approx0.345$ (the representative
$\varphi=-\pi/4$ used to display \eqref{eq:condensed} places it at $\gamma=2$),
while the coupled computed response reaches its minimum slightly earlier, near
$\gamma\approx0.32$, the sweep stopping just below $\gamma_B$. As an internal
check, the volume-averaged first Piola stress $\tau$ (middle) vanishes at the
barrier and at the second minimum, matching $\tau=\mathrm{d}E/\mathrm{d}\gamma$ to graphical accuracy; the slip amplitude (right) grows toward $|\beta_B|$, the magnitude of the optimal slip $\beta_B=\beta^\ast$ at the second well $\gamma_B$, as the layers develop. A random perturbation of the homogeneous state lowers the energy already at the smallest sampled shear, so $\gamma_c\le 0.03$; that the homogeneous state is nowhere a minimizer in $0<\gamma<\gamma_B$ is guaranteed by Proposition 13.

\begin{figure}[h]\centering
\includegraphics[width=\linewidth]{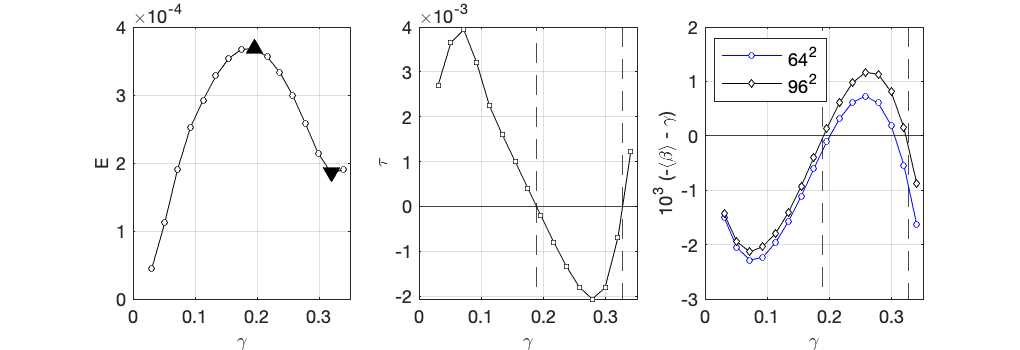}
\caption{Left: energy $E(\gamma)$ along the continuation branch in the hard
device, a double well with the maximum ($\blacktriangle$, $\gamma\approx0.19$)
and the second minimum ($\blacktriangledown$, $\gamma\approx0.32$). Middle:
averaged stress $\tau$; its sign changes fall in the same load increments as
the extrema of $E$. Right: deviation of the mean slip from the lamellar
relation $\langle\beta\rangle=-\gamma$ of Section~\ref{sec:relax} on the
$64^2$ and $96^2$ meshes; it stays below $2.3\times10^{-3}$ in magnitude and
changes sign together with $\tau$. (Left and middle panels: $96\times96$
mesh.)}
\label{fig:res-E}
\end{figure}

The computed branch consists of local minimizers of $I$ at fixed boundary
data and is therefore metastable in the hard device: the globally minimal
response follows the lower convex envelope of the energy, along which the
stress is constant at the Maxwell value \citep{le2019introduction}, and
neither the continuation nor the perturbation test~\eqref{eq:bif-test}
leaves the computed branch for those relaxed states. The double well of
Figure~\ref{fig:res-E} is thus a property of the branch selected by
quasi-static loading, with the pattern formed at small shear carried through
the sweep.

\paragraph{Mesh behavior and non-attainment}
Refining the mesh produces a \emph{finer} laminate (four layers at $64^2$, five at
$96^2$) and a lower energy (Table~\ref{tab:res-mesh}). This is not a defect of the
solver but the numerical signature of non-quasiconvexity
(Proposition~\ref{prop:nonqc}): because the internal length here lies well below
the mesh size, the gradient regularization is unresolved and the computation
reflects the sharp-interface limit, in which the infimum is not attained and finer
meshes capture deeper members of the minimizing sequence. The same mechanism -
numerical error governed by the energy of the microstructural interfaces - is
identified in the systematic comparison of \citet{kumar2020assessment}, where
finite element schemes for unregularized non-convex energies often fail to
converge at all; here the physical gradient regularization guarantees a
minimizer (Proposition~\ref{prop:existence}), and the mesh dependence merely
reflects that the internal length is not yet resolved. The energy therefore
decreases toward its relaxed limit as $h\to0$; on a single mesh neither the absolute energy nor the layer count is the invariant quantity.

\begin{table}[h]\centering\small
\begin{tabular}{@{}cccc@{}}
\toprule
$\gamma$ & $E$ ($64^2$) & $E$ ($96^2$) & relative change \\
\midrule
$0.196$ (barrier) & $4.61\times10^{-4}$ & $3.68\times10^{-4}$ & $20.1\%$ \\
$0.319$ (second well) & $2.96\times10^{-4}$ & $1.86\times10^{-4}$ & $37.2\%$ \\
\bottomrule
\end{tabular}
\caption{Mesh sensitivity of the total energy: the laminate refines and the
energy decreases toward the relaxed limit as $h\to0$, consistent with
non-attainment.}
\label{tab:res-mesh}
\end{table}

\paragraph{The slip field does not converge pointwise; its mean does}
Across an interface with unit normal $\bmm$, a jump $[\beta]$ of the slip
produces the plastic-distortion jump $[\beta]\,\bs\otimes\bmm$ -- a rank-one
connection, hence kinematically compatible and free of elastic energy; and
since the dislocation density is $|\nabla\beta\cdot\bs|$, gradients of
$\beta$ along $\bmm$ carry no dislocation energy either. Slip laminates
normal to $\bmm$ are therefore free in the functional -- the mechanism by
which the single-slip energy loses quasiconvexity
\citep{ortiz1999nonconvex,carstensen2002non,conti2005single,conti2005dislocation}
-- and the discrete minimizer exploits them at the finest scale available to it, namely the mesh (Table 3): at $\gamma=0.34$ the transverse gradient $\max|\nabla\beta\cdot\bmm|$ grows from $28.8$ ($64^2$) to $43.2$ ($96^2$), in the ratio $1.50$ of the mesh sizes, matching the wall gradient $\max|\nabla\beta\cdot\bs|$ ($29.9$ and $44.9$), and $\max|\beta|$ exceeds $|\beta_B|$ by $40\%$ on both meshes: the pointwise signature of a minimizing sequence for an infimum that is not
attained, and no more a defect of the computation than the refinement of the
layer count is. The averages, by contrast, converge: the area-weighted mean
slip obeys the lamellar relation $\langle\beta\rangle=-\gamma$ to within
$2.3\times10^{-3}$ along the entire loading path on both meshes, extrapolating at $\gamma=0.34$ to $0.340\pm 0.001$ (Table 3). Its deviation changes sign together with $\tau$ -- the averaged stress tracks the lag of the mean transformation behind the applied shear (right panel of Figure~\ref{fig:res-E}).

The unpenalized direction could be closed by a transverse term
$\varepsilon_m c_2(\nabla\beta\cdot\bmm)^2$. We do not add it: it would
assign energy to compatible, dislocation-free -- hence physically free --
structures, a change of model rather than a regularization, and none of the
quantities reported here requires it.

\begin{remark}[Multiple slip]
The free direction is a structural feature of single slip: by \eqref{eq:nye}, the
dislocation density of system $i$ involves only $\nabla\beta_i\cdot\bs_i$, so
gradients of each slip along its own plane normal remain dislocation-free for
any number of systems. What changes with several \emph{interacting} systems
is the room such perturbations have. The free direction is system-specific --
with slip directions spanning the plane, no single direction carries a free
oscillation of the total plastic distortion, and energetically free laminates
must be pure in one system; the interaction of the dislocation networks, as
in the double-slip continuum dislocation theory of \citet{le2008plane,le2009plane}, couples the densities wherever walls of different systems meet;
and latent-hardening or dissipative contributions penalize slip amplitude
irrespective of the gradient direction, lifting the degeneracy altogether. We
therefore expect the pointwise non-uniqueness documented above to be specific
to the energetic single-slip idealization, and the multi-slip problem to be
substantially less degenerate.
\end{remark}

\begin{table}[h]\centering\small
\caption{Pointwise and averaged measures of the slip field at $\gamma=0.34$.
Gradients in units of $1/L$; the mesh ratio is $1.50$.}
\label{tab:res-pointwise}
\begin{tabular}{lccc}
\hline
 & $64^2$ & $96^2$ & ratio \\
\hline
$\max|\nabla\beta\cdot\bmm|$ (transverse) & $28.8$ & $43.2$ & $1.50$ \\
$\max|\nabla\beta\cdot\bs|$ (wall)         & $29.9$ & $44.9$ & $1.50$ \\
$\max|\beta|\,/\,|\beta_B|$               & $1.40$ & $1.40$ & $1.00$ \\
$-\langle\beta\rangle\,/\,\gamma$          & $0.995$ & $0.997$ & \\
\hline
\end{tabular}
\end{table}

\paragraph{Relaxation and the relaxed energy}
The mesh dependence above is the numerical face of \emph{relaxation}. The two
wells of the condensed energy carry zero energy and are rank-one connected
(Proposition~\ref{prop:nonqc}), so the quasiconvex envelope - the relaxed
energy - is $e^{\ast\ast}(\gamma)=0$ on the whole laminate window $\gamma\in[0,\gamma_B]$.
Laminates mixing the wells drive the bulk energy to this infimum, while the
dislocation gradient term supplies a positive interfacial cost that grows as the
layers are refined; the total energy therefore approaches $e^{\ast\ast}$ only in
the joint limit of a vanishing internal length. This is exactly the
non-attainment reported for this class of energies
\citep{koster2015aformation}: the physically meaningful outputs are the
relaxed energy and the finite boundary quantities below, not the energy of any
single admissible field.

\paragraph{Local versus global minimizers} A distinction the numerics make
concrete is worth stating. The \emph{bare} condensed functional $\int
e(\gamma)\,d\bx$, without the gradient term, has no minimizer: its infimum is the
relaxed value $e^{\ast\ast}\equiv0$, approached but never attained by ever finer
laminates. The \emph{regularized} functional actually solved here does admit a
minimizer (Proposition~\ref{prop:existence}) - a laminate of finite period; but since the internal length is unresolved on the present meshes, that period is set by the mesh rather than by $\eta$ (Table~\ref{tab:res-mesh}), and the computed state tracks the sharp-interface infimum. A descent method alone would stall at the higher-energy \emph{homogeneous} critical point; the perturbation test exists precisely to
leave it, accepting any energy-lowering symmetry break so that the iteration
approaches the relaxed infimum rather than this local minimizer - the global
viewpoint favored there \citep{kumar2020assessment}.

\paragraph{Resolved internal length}
The computations above operate in the sharp-interface regime: the internal
length lies below the mesh size, and the laminate period is set by the mesh
(Table~\ref{tab:res-mesh}). To demonstrate that the gradient regularization
selects a mesh-independent microstructure once it is resolved, we repeat the
computation with demonstration coefficients $q_c=1.0\times10^{-4}\,\mu$ and
$c_2=2.0\times10^{-4}\,\mu$ - prescribed directly rather than through material
values of $k$ and $\ell$ - at the slip orientation $\varphi=-1.2$
(misorientation $\theta=42.5^{\circ}$) and the mid-window shear $\gamma=0.39$,
$\gamma_B=0.778$. For these coefficients the closed-form layer profile of
Section~\ref{sec:GNB} has the width (full width at half maximum of
$|\nabla\beta\cdot\bs|$) $w_{\mathrm{1d}}=0.087$, resolved by $5.6$, $8.3$ and
$12.5$ elements on the $64^{2}$, $96^{2}$ and $144^{2}$ meshes.

Two regimes result. A resolvable wall --- the layer's full excess energy per
unit length, $2\mu\!\int p\,\mathrm{d}x_{2}+q_{c}|\sin\varphi|\,|\beta_{B}|$,
cf.\ Section~7 --- costs ${\approx}\,e/3.7$ at $\theta=19.6^{\circ}$ but only
${\approx}\,e/13$ at $\theta=42.5^{\circ}$. At the small misorientation this
is the size effect of the theory \citep{koster2015aformation} seen from
within: in a specimen only an order of magnitude larger than the internal
length, a handful of walls consumes the entire driving force, and the crystal
remains a single grain (confirmed by resolved runs at $\varphi=-1.4$,
$\gamma=\gamma_B/2$: the unseeded descent converges to the wall-free state
on $48^{2}$ and $96^{2}$ meshes, and a seeded one-pair laminate relaxes
back to it, the converged energies at fixed mesh agreeing to seven
significant figures).  At $\theta=42.5^{\circ}$ and $\gamma=0.39$, by contrast, the driving force
$e(\gamma)=9.9\times10^{-3}\,\mu$ exceeds the wall cost thirteenfold, and
the laminate forms \emph{spontaneously}: the descent from the affine state
bifurcates into a three-pair laminate without seeding or perturbation, and
seeded branches with one to four pairs all relax to the same six-interface
state, their converged energies agreeing to within one percent and each
lying below the bulk value $e(\gamma)$ of homogeneous slip --- hence,
\emph{a fortiori}, below the energy of any admissible state with uniform
slip in the bulk and its attendant boundary layers.

Table~\ref{tab:resolved} collects the mesh study. The interior wall count is
six on every mesh from $48^{2}$ to $144^{2}$; the measured wall width
descends monotonically onto the closed-form value $w_{\mathrm{1d}}=0.087$;
and the energy decrement per refinement falls from $3.4\%$ to $1.3\%$ -
percent-level residuals of the boundary-layer resolution, against the
$20$-to-$37\%$ of the unresolved computation (Table~\ref{tab:res-mesh}). The
count and the width are thus set by the internal length, not by the mesh:
the regularized functional attains its minimizer (Proposition~\ref{prop:existence}) and the computation converges to it. Figure~\ref{fig:resolved} shows the converged plastic slip and the wall profile at $144^{2}$. The walls are slightly inclined to the shear direction and the slip in the $B$ lamellae overshoots the well value ($\beta$ reaches $-0.93$ against $\beta_B=-0.78$), since the local shear in the constrained
bands exceeds $\gamma_B$; accordingly the measured peak of
$|\nabla\beta\cdot\bs|$ exceeds the one-dimensional prediction, which
idealizes the wall as a plane rank-one transition between the unperturbed
wells, while the width - the quantity selected by the regularization - matches
it. Finally, repeating the $144^{2}$ computations with the transverse
regularization $\tfrac{1}{2}\varepsilon_m c_2(\nabla\beta\cdot\bmm)^{2}$,
$\varepsilon_m=10^{-2}$, changes the total energy by $0.18\%$ - the value of
the added penalty itself - and the wall count, width and pattern not at all;
together with the boundedness of $\max|\nabla\beta\cdot\bmm|$ under
refinement, this rules out relaxation by transverse laminates, the numerical
counterpart of the missing transverse compactness discussed in
Remark~\ref{rem:slipbound}.

\begin{table}[htb]
\centering
\begin{tabular}{ccccc}
\hline
mesh & $E$ (unseeded) & interior walls & FWHM & decrement \\
\hline
$48^{2}$  & $9.208\times10^{-3}$ & 6 & $0.096$ & - \\
$64^{2}$  & $8.899\times10^{-3}$ & 6 & $0.091$ & $3.4\%$ \\
$96^{2}$  & $8.657\times10^{-3}$ & 6 & $0.089$ & $2.7\%$ \\
$144^{2}$ & $8.540\times10^{-3}$ & 6 & $0.084$--$0.088$ & $1.3\%$ \\
\hline
\end{tabular}
\caption{Resolved internal length ($q_c=10^{-4}\mu$, $c_2=2\times10^{-4}\mu$,
$\varphi=-1.2$, $\gamma=0.39$): total energy of the unseeded run, interior
wall count of the converged laminate, measured wall width against the
closed-form $w_{\mathrm{1d}}=0.087$, and energy decrement per refinement.
The wall count is mesh-independent and the energy decrement is at the percent
level, against $20$-to-$37\%$ in the unresolved regime of
Table~\ref{tab:res-mesh}. (On the finest mesh a vertical cut may register
five interior walls, the walls being slightly inclined; the seeded branches
give six on every mesh.)}
\label{tab:resolved}
\end{table}

\begin{figure}[htb]
\centering
\includegraphics[width=0.48\textwidth]{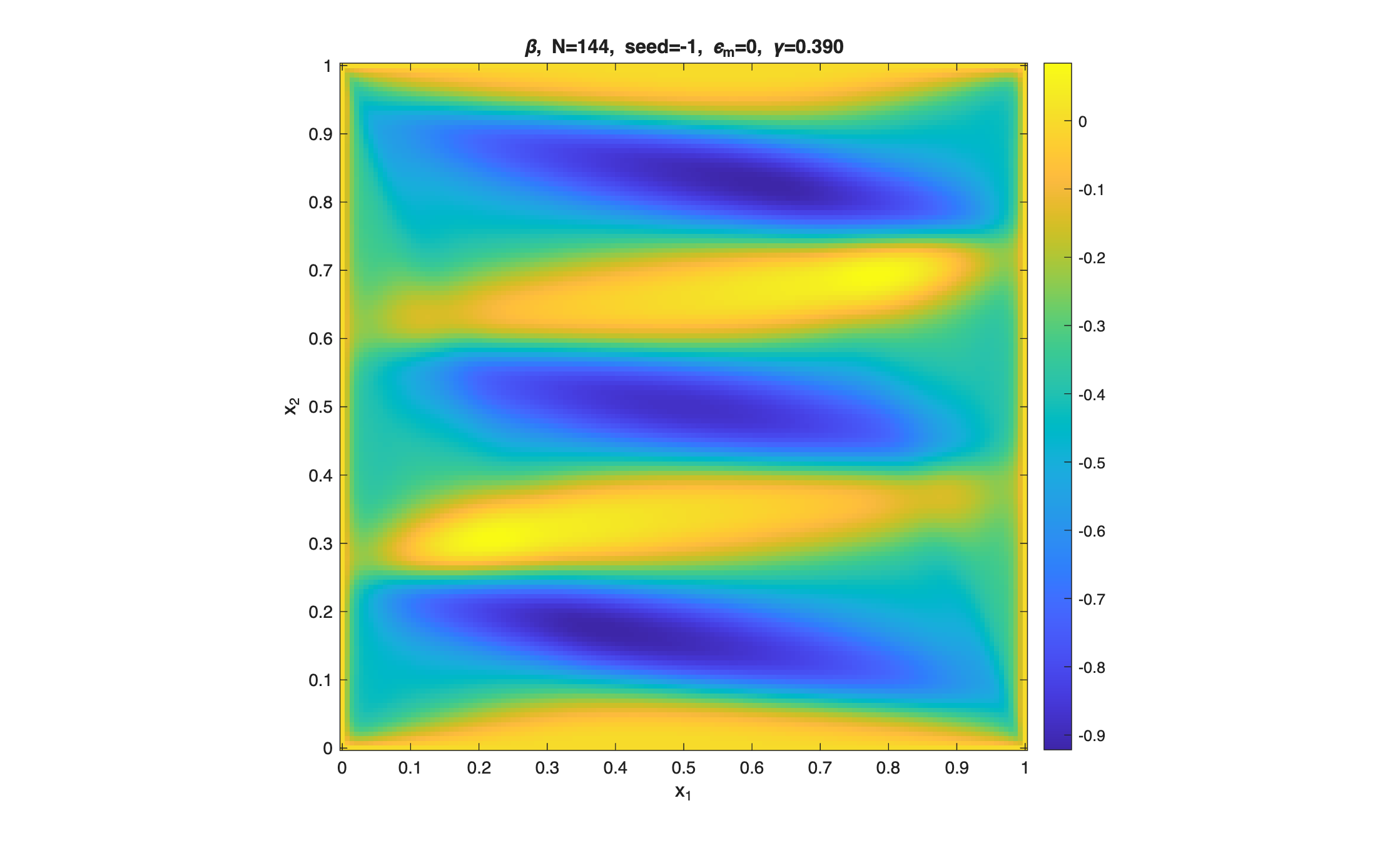}\hfill
\includegraphics[width=0.48\textwidth]{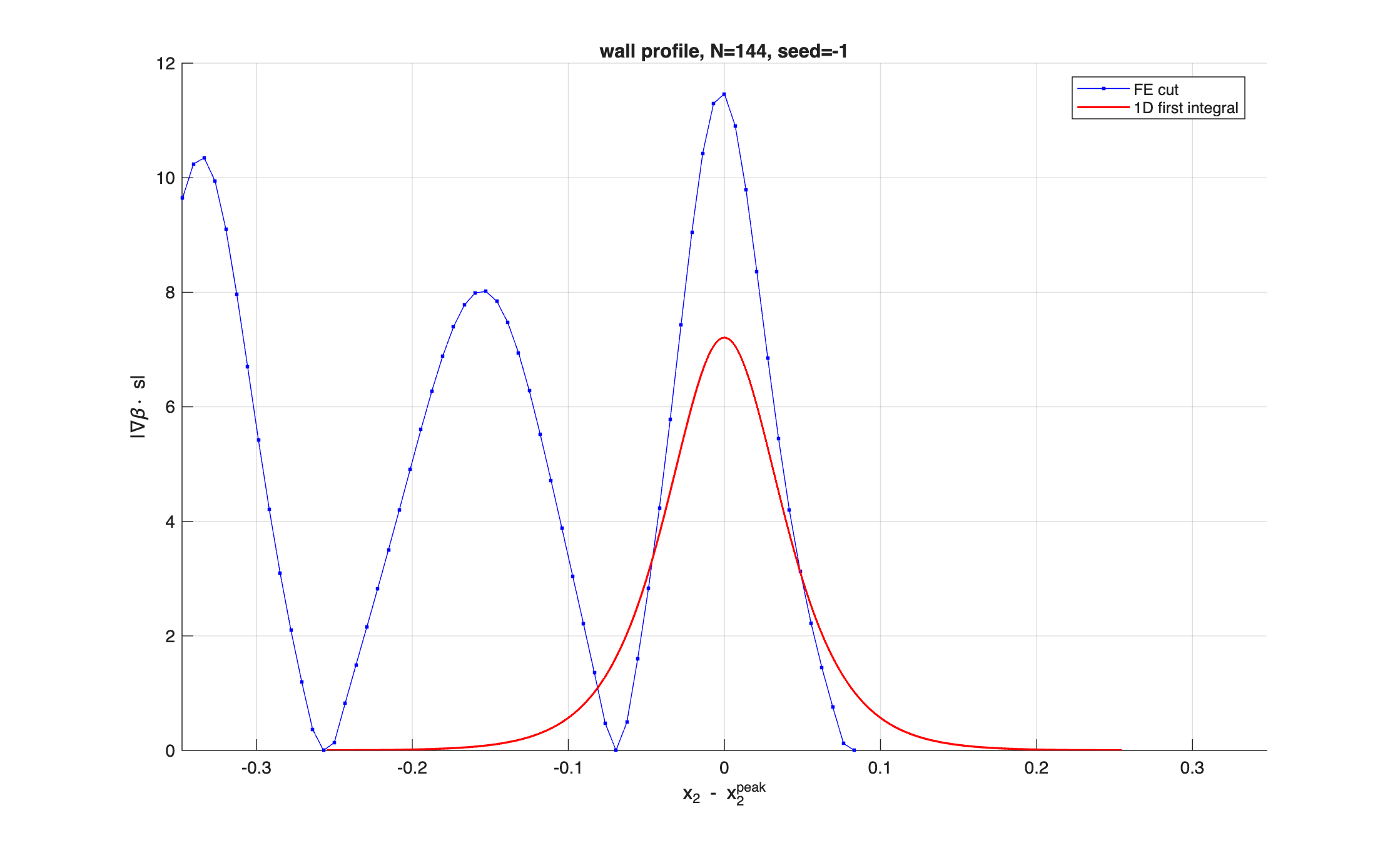}
\caption{Resolved internal length at $144^{2}$ ($\varphi=-1.2$,
$\gamma=0.39$). Left: converged plastic slip $\beta$ of the unseeded run -
the laminate of three $B$ lamellae forms spontaneously from the affine
state; the walls are slightly inclined to the shear direction, and the slip
in the $B$ lamellae overshoots the well value $\beta_B=-0.78$. Right: wall
profile $|\nabla\beta\cdot\bs|$ on a vertical cut against the closed-form
layer profile of Section~\ref{sec:GNB}: the measured width matches
$w_{\mathrm{1d}}=0.087$, while the peak exceeds the one-dimensional value in
proportion to the enlarged well-to-well jump of the constrained
two-dimensional state.}
\label{fig:resolved}
\end{figure}

\paragraph{Boundary energy across misorientation}
Running the scheme across slip orientations $\varphi$ (misorientation
$\theta=2\varphi+\pi$) yields, for each converged laminate, the dislocation
content $N/L$ (walls per unit width, read from the finite element solution).
Together with the sharp-wall boundary energy $\gamma_G=q_c|\beta_B|$ per unit
boundary length - a closed-form quantity of the model - these are collected in
Table~\ref{tab:misor} and Figure~\ref{fig:misor}. Unlike the total energy, the
closed-form $\gamma_G$ is mesh-independent, while $N/L$ - read from the $48^2$
laminate - reflects a single mesh; both nonetheless grow monotonically with
$\theta$ and give the physical characterization of the geometrically necessary
boundaries, in the spirit of the earlier boundary-energy
computations. The corresponding boundary thickness $h(\theta)$ is
obtained in Section~\ref{sec:GNB} from the closed-form layer potential
(Table~\ref{tab:hthickness}).

\begin{table}[h]\centering\small
\begin{tabular}{@{}cccc@{}}
\toprule
$\theta$ (deg) & $\varphi$ (rad) & $\gamma_G$ (units of $\mu L$) & $N/L$ \\
\midrule
 8.1 & $-1.50$ & $2.16\times10^{-8}$ & 2.27 \\
19.6 & $-1.40$ & $5.25\times10^{-8}$ & 2.94 \\
31.0 & $-1.30$ & $8.44\times10^{-8}$ & 6.42 \\
42.5 & $-1.20$ & $1.18\times10^{-7}$ & 7.31 \\
53.9 & $-1.10$ & $1.55\times10^{-7}$ & 7.76 \\
65.4 & $-1.00$ & $1.95\times10^{-7}$ & 8.01 \\
\bottomrule
\end{tabular}
\caption{Boundary energy $\gamma_G=q_c|\beta_B|$ (dislocation-wall energy per unit boundary length) and dislocation content $N/L$ (walls per unit width, read from
the converged $48^2$ laminate) versus misorientation angle
$\theta=2\varphi+\pi$. Both increase monotonically with $\theta$; the growth of
$\gamma_G$ mirrors the boundary-energy curve of
\citet{koster2015bformation}. The closed-form $\gamma_G$ is mesh-independent,
while $N/L$ is a single-mesh count.}
\label{tab:misor}
\end{table}

\begin{figure}[h]\centering
\includegraphics[width=\linewidth]{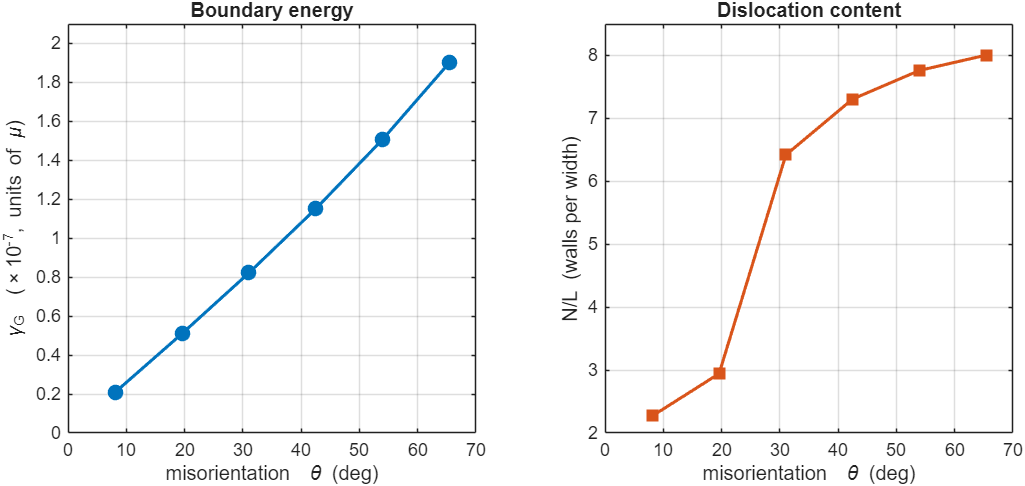}
\caption{Boundary energy $\gamma_G$ (left) and dislocation content $N/L$ (right)
versus misorientation angle $\theta$: both grow with $\theta$; $\gamma_G$ is
mesh-independent while $N/L$ is a single-mesh count.}
\label{fig:misor}
\end{figure}

\paragraph{Physical scales} The computation is dimensionless; the internal
length restores the physical scale. For aluminum ($\mu=26$~GPa) with Burgers
vector $b=0.25$~nm and saturated dislocation density
$\rho_s=10^{16}\,\mathrm{m}^{-2}$ \citep{koster2015bformation}, the internal length is $\ell=1/(b\rho_s)\approx400$~nm, and the value $\eta=\ell/L=1.9\times10^{-3}$ used here corresponds to a specimen $L\approx0.2$~mm. The computed walls have width of order $\sqrt{k}\,\ell$ - a few nanometers for the value of $k$ used in the runs - while the profile-based thickness for the material parameters
above is the $h=48$-to-$8$~nm of
Table~\ref{tab:hthickness}, close to their analytical $50$-to-$6$~nm; the grain
(laminate) spacing $L/N$ is of order tens of micrometers. Each geometrically
necessary boundary then contains
$N_d=|\beta_B||\sin\varphi|\,L/b\approx3\times10^{5}$ dislocations. In the same units the computed boundary energies of
Table~\ref{tab:misor} convert, through the factor $\mu L$ and at the dislocation
modulus of the runs, to $\gamma_G\approx0.1$-$1.1$~N/m across $\theta=8$-$65^\circ$ - the same range as the approximately $1$~N/m obtained analytically for
misorientations up to $50^\circ$ in the sharp-wall theory. The grain count follows the size effect $\varepsilon\sim\sqrt{\gamma_G L/\mu}$, i.e.\
$N\sim L/\varepsilon\sim\sqrt{\mu L/\gamma_G}$: because the boundary energy
$\gamma_G=q_c|\beta_B|\propto k$ grows with the dislocation modulus $k$, a larger $k$ produces thicker walls, coarser grains and hence fewer of them. A directly
mesh-converged grain count is precluded by the non-attainment (the walls lie below
the mesh size), so we report this scaling rather than a single-mesh integer; the
fixed-mesh trend with $k$ is nonetheless examined below.

\paragraph{Grain count versus the dislocation modulus}
Sweeping the dislocation modulus over $k=2\times10^{-4}$ to $8\times10^{-3}$ at
fixed cell, mesh and protocol ($64^{2}$, $\gamma\to0.216$) probes how the grain
count responds to the price of a wall (Figure~\ref{fig:ngraink}). The wall count
stays on a plateau $N\approx6$ for $k\le10^{-3}$, then coarsens: $N=5.9$, $5.4$
and $3.4$ at $k=2$, $4$ and $8\times10^{-3}$. The mechanism is visible in the
energy ledger: the wall energy grows with $k$ - linearly while the count is
constant, from $7.9\times10^{-7}$ at $k=2\times10^{-4}$ to $1.7\times10^{-5}$ at
$8\times10^{-3}$ - yet stays below $4\%$ of the elastic energy even at the
largest $k$, and more than two orders of magnitude below it on the plateau, so
the laminate is at first insensitive to $k$; once the wall bill becomes
appreciable the system trades walls for elastic energy (the elastic part rises
from $4.486\times10^{-4}$ to $4.552\times10^{-4}$ across the sweep) and the
microstructure coarsens. The direction agrees with the size-effect scaling
$N\sim\sqrt{\mu L/\gamma_G}$ with $\gamma_G\propto k$; a quantitative exponent
would require resolving the wall interior, which lies below the mesh size here,
and is left to future work at resolved internal length across the full misorientation range.

\begin{figure}[h]\centering
\includegraphics[width=0.62\linewidth]{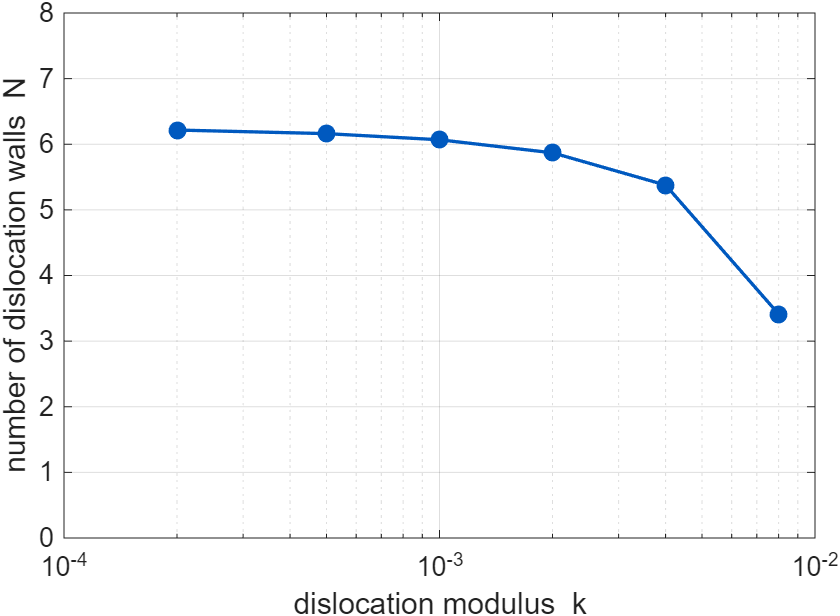}
\caption{Wall count $N$ versus the dislocation modulus $k$ at fixed cell, mesh
and load ($64^{2}$, $\gamma=0.216$): a plateau $N\approx6$ for $k\le10^{-3}$,
followed by coarsening once the wall energy becomes appreciable relative to the
elastic energy.}
\label{fig:ngraink}
\end{figure}

\begin{figure}[h]\centering
\includegraphics[width=\linewidth]{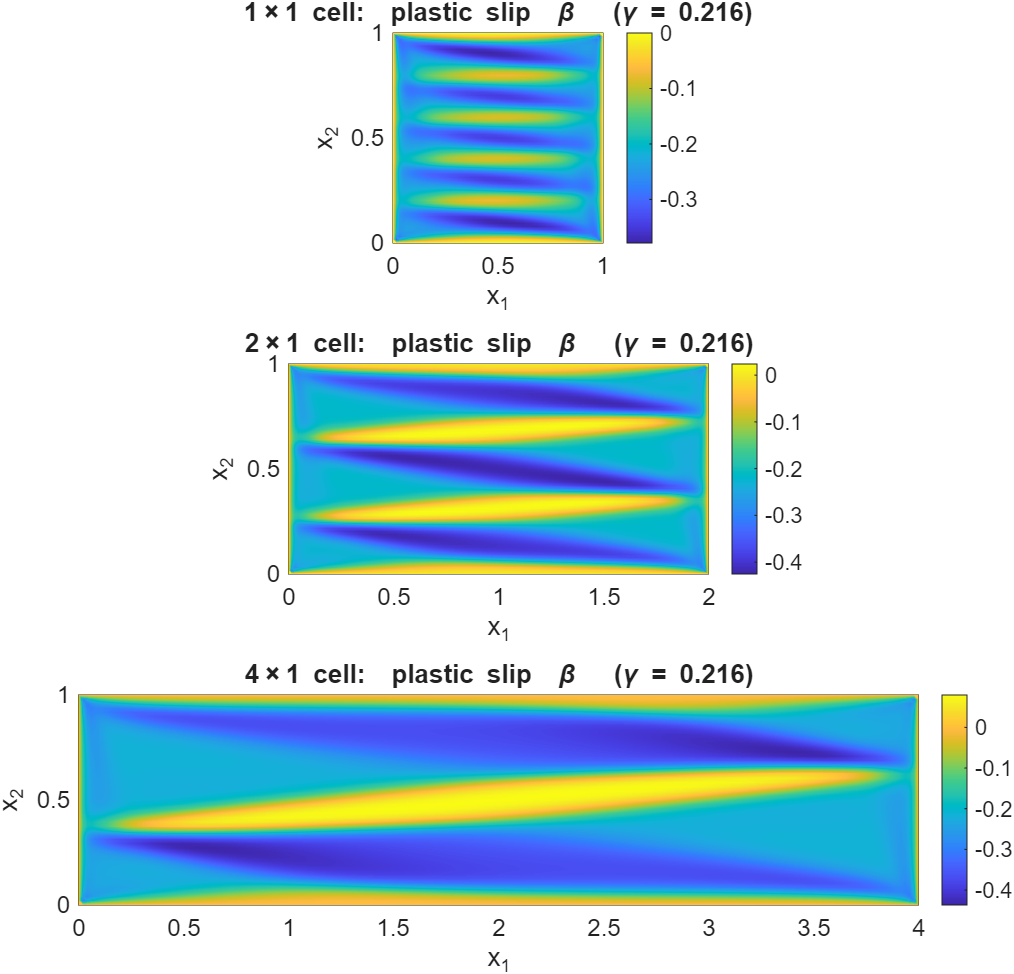}
\caption{Plastic slip $\beta$ at $\gamma=0.216$ on $1\times1$, $2\times1$ and
$4\times1$ cells (same $k$ and protocol): the laminate stays parallel to the
shear direction and coarsens with the cell length - about six, four and three
layers - following the size-effect scaling
$\varepsilon\sim\sqrt{\gamma_G L/\mu}$.}
\label{fig:rect}
\end{figure}

\paragraph{Elongated cells}
Repeating the computation on $2\times1$ and $4\times1$ cells (meshes
$128\times64$ and $192\times48$; the same $k=8\times10^{-5}$ and loading
protocol as the $1\times1$ baseline of Figure~\ref{fig:rect}, top) probes the
response to the specimen shape (Figure~\ref{fig:rect}). Three observations. First, the
layers remain parallel to the shear direction at every aspect ratio, with only a
slight inclination of the interior interfaces developing in the longest cell.
Second, the laminate \emph{coarsens} with the cell length: about six layers
across the height of the $1\times1$ cell, four on $2\times1$ and three on
$4\times1$ - in close agreement with the size effect
$\varepsilon\sim\sqrt{\gamma_G L/\mu}$ of \citet{koster2015bformation}, which predicts the layer count to fall by $1/\sqrt2\approx0.71$ and $1/2$ - matching the
observed reductions to within the integer count. (The $4\times1$ cell carries a
slightly coarser mesh, so its count is also mesh-influenced; the mesh-matched
$1\times1$ against $2\times1$ pair carries the clean comparison. On a fixed mesh
the count also depends mildly on the loading protocol - different protocols
select energetically near-degenerate laminates, consistent with
Remark~\ref{rem:notconvex} - which is why the mesh study above, run under a
different protocol, reports four layers on the same mesh; the three runs
compared here share one protocol.) Third, the mean energy density
falls with the aspect ratio, $4.49\times10^{-4}\to3.05\times10^{-4}\to
2.53\times10^{-4}$, and the share of the energy residing near the specimen
boundary falls from $46\%$ to $41\%$: the hard-device ends are expensive, and
lengthening the cell dilutes them, moving the mean energy toward the relaxed
laminate value.

\paragraph{Energy budget}
Decomposing the total energy of the working state ($1\times1$, $\gamma=0.216$; the six-lamella state of Figure 7, top)
by term and by region makes the relaxation picture quantitative. By term, the
elastic part carries $99.9\%$ of the total ($4.486\times10^{-4}$, against
$3.2\times10^{-7}$ for the bounded-variation wall term and $1.1\times10^{-9}$ for
the quadratic gradient term): the dislocation walls themselves are energetically
almost free, exactly as the smallness of $\gamma_G$ in Table~\ref{tab:misor}
anticipates. By region, a band of width $0.08$ along the specimen boundary -
$29\%$ of the area - holds $46\%$ of the energy, so the energy density near the
boundary is about twice that of the interior. The interior laminate is thus
nearly stress-free while the hard-device boundary layers concentrate the cost:
the numerical face of the vanishing relaxed energy $e^{\ast\ast}=0$ of
Section~\ref{sec:relax}, with the residual energy pushed into the constrained
layers.

\section{Conclusions}\label{sec:conclusions}

We have studied the formation of grain boundaries in ductile single crystals under plane-strain simple shear within continuum dislocation theory, replacing the non-polyconvex Saint-Venant--Kirchhoff energy of earlier treatments \citep{koster2015aformation,koster2015bformation} by the polyconvex Ciarlet–Geymonat energy, following the dissertation of \citet{koster2018modeling}, which first replaced the SVK energy of this setting by a polyconvex (neo-Hookean) one. This choice renders the deformation subproblem well posed and yields existence of minimizers for the coupled deformation--slip problem (Proposition~\ref{prop:existence}), while reproducing the same isotropic Hooke law in the small-strain limit and a condensed energy that remains a double well, now of lower algebraic degree than its Saint-Venant--Kirchhoff counterpart. The non-quasiconvexity of this energy drives the spontaneous formation of a lamellar microstructure whose interfaces are grain boundaries; the dislocation-density gradient regularizes these boundaries to a finite thickness and energy, closely matching the earlier analysis. A block-coordinate finite element scheme --- alternating a convex ADMM solve for the plastic slip with a Levenberg-regularized Newton solve for the deformation, together with an energy-based test for the onset of microstructure --- reproduces this lamellar structure numerically from a homogeneous initial state, without the one-dimensional ansatz of the analytical route, with wall count and width becoming mesh-independent once the internal length is resolved. The robustness of the results under the change of elastic energy indicates that grain-boundary formation is an intrinsic consequence of the geometric incompatibility between the imposed shear and the slip system; the scheme extends naturally to multiple slip systems, other loadings, and general two-dimensional geometries where closed-form solutions are unavailable. In particular, the interaction of multiple dislocation systems \citep{le2008plane,le2009plane} couples the slips and should resolve the transverse degeneracy noted in Section 8, rendering the multi-slip extension a more realistic test of the present picture. Natural next steps include a fully resolved boundary-layer profile at the material internal length, rate effects and dissipation, and the extension to three dimensions.
\bibliographystyle{unsrtnat}
\bibliography{refs}
\end{document}